\def\url#1{{\texttt #1}}
\newtheoremstyle{theorem}{1em}{1em}{\slshape}{0pt}{\bfseries}{.}{ }{}
\theoremstyle{theorem}
\newtheorem{theorem}{Theorem}
\newtheorem*{theorem*}{Theorem}
\newtheorem{corollary}[theorem]{Corollary}
\newtheorem{lemma}[theorem]{Lemma}
\theoremstyle{remark}
\newtheorem*{remark*}{Remark}
\providecommand{\setR}{\mathbb{R}}
\newcommand{\E}{\mathop{\mathbb{E}}}
 \theoremstyle{theorem}
\newtheorem*{homework*}{Homework}
 \theoremstyle{definition}
\newtheorem*{claim*}{Claim}
\newtheorem*{example*}{Beispiel}
        \def\drawRect#1#2#3#4#5{
           \FPeval{\x2}{(#2) + #4} 
           \FPeval{\y2}{(#3) + #5} 
           \pspolygon[#1](#2,#3)(\x2,#3)(\x2,\y2)(#2,\y2)
        }
\begin{document}

\title{Constructive discrepancy minimization for convex sets}
\date{} 
\author{Thomas Rothvo{\ss}\thanks{Email: {\tt rothvoss@uw.edu}. The conference version of this paper
appeared at FOCS'14. Supported by NSF grant 1420180 with title ``\emph{Limitations of convex relaxations in combinatorial optimization}''.} 
\vspace{2mm} \\ University of Washington, Seattle} 
\maketitle

\begin{abstract}
A classical theorem of Spencer shows that any set system with $n$ sets and $n$ elements
admits a coloring of discrepancy $O(\sqrt{n})$. 
Recent exciting work of Bansal, Lovett and Meka shows that such colorings can be found in 
polynomial time. In fact, the Lovett-Meka algorithm finds a half integral point in 
any ``large enough'' polytope. However, their algorithm crucially relies on
the facet structure and does not apply to general convex sets. 

We show that for any symmetric convex set $K$ with Gaussian measure at least $e^{-n/500}$,
the following algorithm finds a point $y \in K \cap [-1,1]^n$ with $\Omega(n)$ coordinates in $\pm 1$:  
(1) take a random Gaussian vector $x$; (2) compute the point $y$ in $K \cap [-1,1]^n$ that is closest to $x$. (3) return $y$.

This provides another truly constructive proof of Spencer's theorem and the first constructive
proof of a Theorem of Gluskin and Giannopoulos. 
\end{abstract}

\section{Introduction}

Discrepancy theory deals with finding a \emph{bi-coloring} $\chi : \{ 1,\ldots,n\} \to \{ \pm 1\}$ of a set system $S_1,\ldots,S_m \subseteq \{ 1,\ldots,n\}$ 
so that the worst inbalance $\max_{i=1,\ldots,m} |\chi(S_i)|$ of a set
is minimized, where we denote $\chi(S_i) := \sum_{j \in S_i} \chi(j)$.
A seminal result of Spencer~\cite{SixStandardDeviationsSuffice-Spencer1985} 
says that there is always a coloring 
$\chi$ so that $|\chi(S_i)| \leq O(\sqrt{n})$ if $m = n$. The result is in particular
interesting since it beats the random coloring which has discrepancy $\Theta(\sqrt{n\log n})$.
Spencer's technique, which was first used by Beck in 1981~\cite{Beck-RothsEstimateIsSharp1981} is usually called
the \emph{partial coloring method} and is based on the argument that 
due to the pigeonhole principle many of the $2^n$ many colorings $\chi,\chi'$ must satisfy 
 $|\chi(S_i) - \chi'(S_i)| \leq O(\sqrt{n})$ for all sets $S_i$. Then one can take the \emph{difference}
between such a pair of colorings with $|\{ j \mid \chi(j) \neq \chi'(j) \}| \geq \frac{n}{2}$ to obtain a \emph{partial coloring} of low discrepancy.
Iterating the argument $\log n$ times provides a full coloring. 

Few years later and on the other side of the iron curtain, 
Gluskin~\cite{RootNDisc-Gluskin89} obtained the same result using convex geometry arguments.
In a paraphrased form, Gluskin's result showed the following:
\begin{theorem}[Gluskin~\cite{RootNDisc-Gluskin89}, Giannopoulos~\cite{Giannopoulos1997}]
For a small constant $\delta > 0$, let $K \subseteq \setR^n$ be a symmetric convex set 
with Gaussian measure $\gamma_n(K) \geq e^{-\delta n}$ and 
$v_1,\ldots,v_m \in \setR^n$ vectors of length $\|v_i\|_2 \leq \delta$. Then 
there are partial signs $y_1,\ldots,y_m \in \{ -1,0,1\}$ with $|\textrm{supp}(y)| \geq \frac{m}{2}$
so that $\sum_{i=1}^m y_iv_i \in 2K$.
\end{theorem}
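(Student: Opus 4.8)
\medskip
\noindent The plan is a partial-coloring argument in the style of Gluskin. Write $A=[\,v_1\,|\,\cdots\,|\,v_m\,]$ for the $n\times m$ matrix whose columns are the $v_i$, so that $\sum_iy_iv_i=Ay$, and note $0\in K$ since $K$ is symmetric, convex and nonempty. First I would reduce the statement to a pigeonhole problem on the discrete cube: it suffices to find two sign vectors $\chi,\chi'\in\{-1,1\}^m$ that are \emph{far}, $\big|\{i:\chi_i\neq\chi_i'\}\big|\geq m/2$, and \emph{nearly equivalent}, $A(\chi-\chi')\in 2K$. Indeed $y:=\tfrac12(\chi-\chi')$ then has entries in $\{-1,0,1\}$, support exactly $\{i:\chi_i\neq\chi_i'\}$, and $\sum_iy_iv_i=\tfrac12A(\chi-\chi')\in K\subseteq 2K$.

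To produce such a pair I would first run a Gaussian averaging step to obtain one very large ``monochromatic'' class of pairwise nearly-equivalent sign vectors. Call $\chi\in\{-1,1\}^m$ \emph{typical} if $\|A\chi\|_2^2\leq 2\sum_i\|v_i\|_2^2\leq 2\delta^2m$; since $\E_{\chi}\|A\chi\|_2^2=\sum_i\|v_i\|_2^2$, Markov's inequality leaves at least $2^{m-1}$ typical sign vectors. For a standard Gaussian $G\sim N(0,I_n)$ and any $p\in\setR^n$, the symmetry of $K$ yields the shift bound
\[
\gamma_n(p+K)=e^{-\|p\|_2^2/2}\cdot(2\pi)^{-n/2}\!\int_Ke^{-\|x\|_2^2/2}\cosh(\langle x,p\rangle)\,dx\;\geq\;e^{-\|p\|_2^2/2}\,\gamma_n(K),
\]
so every typical $\chi$ has $\Pr_G[\,G\in A\chi+K\,]\geq e^{-\delta^2m}\gamma_n(K)\geq e^{-\delta^2m-\delta n}$, whence $\E_G\big[\#\{\chi\text{ typical}:G\in A\chi+K\}\big]\geq 2^{m-1}e^{-\delta^2m-\delta n}$. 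Fixing a $G$ attaining at least this value, the sign vectors $\chi$ with $G\in A\chi+K$ form a class $\mathcal C$ of that size, and any two of them satisfy $A(\chi-\chi')\in(G+K)-(G+K)=K-K=2K$.

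It remains to pull a far pair out of $\mathcal C$, and here I would invoke Kleitman's diameter theorem: a subset of $\{-1,1\}^m$ all of whose pairwise Hamming distances are $<m/2$ has size at most $\sum_{i\leq m/4}\binom{m}{i}\leq 2^{0.82m}$. Hence as soon as $2^{m-1}e^{-\delta^2m-\delta n}>2^{0.82m}$, the class $\mathcal C$ contains two sign vectors at Hamming distance $\geq m/2$ --- precisely the far, nearly-equivalent pair we need. Taking logarithms this reads $(0.18-o(1))\,m\geq(\delta^2m+\delta n)/\ln 2$, which in the intended regime $n=m$ is simply $\delta^2+\delta<0.18\ln 2\approx 0.12$, and so holds for every sufficiently small constant $\delta$ (in particular $\delta\leq\tfrac{1}{500}$).

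The step I expect to be the real crux is the passage from ``$\Omega(m)$ coordinates'' to ``$\geq m/2$ coordinates.'' A crude pigeonhole --- cover $A([-1,1]^m)$ by translates of $K$ and keep the largest class --- only confines that class to a Hamming ball of radius $<m/2$, and such a ball already contains $\approx 2^{m-1}$ points, so the crude bound would demand a single translate of $K$ to cover everything. What rescues the factor $\tfrac12$ is Kleitman's \emph{sharp} estimate --- a large set has large \emph{diameter}, not merely large circumradius --- together with $\delta$ being genuinely small; everything else (the reduction, the shift inequality, the averaging) is routine. I note finally an entirely different, \emph{constructive} route, which is the one developed in this paper: apply the Gaussian-projection procedure of the abstract, after suitable rescaling, to a symmetric convex body built from the $v_i$ and $K$, and argue that the resulting point has $\Omega(m)$ coordinates equal to $\pm1$, from which a partial coloring falls out. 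I would take that route whenever an algorithm, rather than mere existence, is wanted.
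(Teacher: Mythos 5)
Your proof is correct and follows essentially the same route as the paper's own sketch (and the Giannopoulos exposition it cites): average over the $2^m$ translates $A\chi+K$ to find one Gaussian point covered by exponentially many of them, so that any two sign vectors in that class have difference mapped into $2K$, and then extract a pair at Hamming distance at least $m/2$. Your identification of Kleitman's isodiametric theorem as what buys the sharp factor $\tfrac12$ (rather than merely $\Omega(m)$), and your remark that the counting implicitly needs $m\gtrsim \delta n$ (e.g.\ the intended regime $m=n$), are precisely the details the paper glosses over by deferring to~\cite{Giannopoulos1997}.
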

For the proof, consider all $2^m$ many translates $\sum_{i=1}^m y_iv_i + K$ with $y \in \{ \pm 1\}^m$.
Then one can estimate that the total measure of the translates must be much bigger than 1, 
so there must be many pairs $y', y'' \in \{ \pm 1\}^m$
so that the translates overlap. Then take a pair that differs in at least half of the entries and
$y := \frac{1}{2}(y' - y'')$ gives the vector that we are looking for. 
For more details, we refer to the very readable exposition of Giannopoulos~\cite{Giannopoulos1997}.

In both, Spencer's original result and the convex geometry approach of Gluskin and Giannopoulos, 
the argument goes via the pigeonhole principle with exponentially many
 ``pigeons'' and ``pigeonholes'' which makes both type of proofs non-constructive. 
In a more recent breakthrough, Bansal~\cite{DiscrepancyMinimization-Bansal-FOCS2010} showed that a random walk, 
guided by the solution of an SDP 
can find the coloring for Spencer's Theorem in polynomial time. However, the approach needs a very careful
choice of parameters and the feasibility of the SDP still relies on the non-constructive argument. 
A simpler and truly constructive approach was provided by Lovett and Meka~\cite{DiscrepancyMinimization-LovettMekaFOCS12} who showed that
a ``large enough'' polytope of the form $P = \{ x \in \setR^n : \left|\left<v_i,x\right>\right| \leq \lambda_i \; \forall i \in [m] \}$ has a point $y \in P \cap [-1,1]^n$ 
that can be found in polynomial time and satisfies $y_i \in \{ -1,1\}^{n}$ for at least half of the coordinates. If the $v_i$'s are scaled to unit length, then the 
``largeness'' condition requires that  
\begin{equation} \label{eq:EntropyCondtion}
  \sum_{i=1}^m e^{-\lambda_i^2/16} \leq \frac{n}{16}.
\end{equation}
The approach of Lovett and Meka is surprisingly simple: start a random walk
at the origin and each time you hit one of the constraints $\left<v_i,x\right> = \pm \lambda_i$ or $x_i=\pm 1$, 
continue
the random walk in the subspace of the tight constraint. The end point of this random walk is the desired 
point $y$. 



Still, the algorithm of Lovett and Meka does not seem to generalize
to arbitrary convex sets and the condition in \eqref{eq:EntropyCondtion} might not be satisfied
for convex sets even if they have a large measure.  

\subsection{Related work}

If we have a set system $S_1,\ldots,S_m$ where each element lies in 
at most $t$ sets, then the partial coloring
technique from above can be used to  find a coloring of discrepancy $O(\sqrt{t} \cdot \log n)$~\cite{DiscrepancyBound-sqrtT-logN-SrinivasanSODA97}. A linear programming approach of
Beck and Fiala~\cite{IntegerMakingTheorems-BeckFiala81} shows that the discrepancy
is bounded by $2t-1$, independent of the size of the set system. 
On the other hand, there is a non-constructive approach of 
Banaszczyk~\cite{BalancingVectors-Banaszczyk98} that provides a bound of $O(\sqrt{t \log n})$ 
using a different type of convex geometry arguments. A conjecture of Beck and Fiala says that 
the correct bound should be $O(\sqrt{t})$. This bound can be achieved 
for the vector coloring version, see Nikolov~\cite{NikolovVectorColoringKomlosArxiv2013}.

More generally, the theorem of 
Banaszczyk~\cite{BalancingVectors-Banaszczyk98} shows that for any convex set $K$ with Gaussian measure at least $\frac{1}{2}$
and any set of vectors $v_1,\ldots,v_m$ of length $\|v_i\|_2 \leq \frac{1}{5}$, there exist 
signs $\varepsilon_i \in \{ \pm 1\}$ so that $\sum_{i=1}^m \varepsilon_iv_i \in K$.

A set of $k$ permutations on $n$ symbols induces a set system with $kn$ sets
given by the prefix intervals. One can use the partial coloring method
to find a  $O(\sqrt{k} \log n)$ discrepancy coloring~\cite{DiscrepancyOfPermutations-SpencerEtAl}, 
while a linear programming
approach gives a $O(k \log n)$ discrepancy~\cite{DiscrepancyOf3Permutations-Bohus90}. 
In fact, for any $k$ one can always color half of the elements with a
discrepancy of $O(\sqrt{k})$ --- this even holds for each induced sub-system~\cite{DiscrepancyOfPermutations-SpencerEtAl}.
Still, \cite{CounterexampleBecksPermutationConjecture-FOCS12} constructed  
3 permutations requiring a discrepancy of 
$\Theta(\log n)$ to color all elements. 

Also the recent proof of the Kadison-Singer conjecture by Marcus, Spielman
and Srivastava~\cite{KadisonSingerConjecture-MarcusSpielmanSrivastavaArxiv2013} can be 
seen as a discrepancy result. 
They show that a set of vectors $v_1,\ldots,v_m \in \setR^n$ with $\sum_{i=1}^m v_iv_i^T = I$
can be partitioned into two halfs $S_1,S_2$ so that $\sum_{i \in S_j} v_iv_i^T \preceq (\frac{1}{2} + O(\sqrt{\varepsilon}))I$ for $j \in \{ 1,2\}$
where $\varepsilon = \max_{i=1,\ldots,m}\{ \|v_i\|_2^2 \}$
and $I$ is the $n \times n$ identity matrix. Their method is based on
interlacing polynomials and no polynomial time algorithm is known to find
the desired partition.

For a very readable introduction into discrepancy theory, we recommend
Chapter~4 in the book of Matou{\v s}ek~\cite{GeometricDiscrepancy-Matousek99}
or the book of Chazelle~\cite{DiscrepancyMethod-Chazelle2001}.

\subsection{Our contribution}

Our main contribution is the following: 
\begin{theorem} \label{thm:MainTheorem}
There is a randomized polynomial time algorithm, which for any symmetric convex set $K \subseteq \setR^n$ with 
Gaussian measure at least $e^{-n/500}$ finds a point $y \in K \cap [-1,1]^n$ with $y_i \in \{ - 1,1\}$
for at least $\frac{n}{9000}$ many coordinates. Here it suffices if a polynomial time separation oracle
for the set $K$ exists.  
\end{theorem}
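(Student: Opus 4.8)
The plan is to analyze the very simple algorithm proposed in the abstract: sample a standard Gaussian vector $x \sim N(0,I_n)$, and let $y$ be the Euclidean projection of $x$ onto the closed convex body $K \cap [-1,1]^n$. (This projection is computable in polynomial time given a separation oracle for $K$, via the ellipsoid method.) The output $y$ is automatically in $K \cap [-1,1]^n$, so the only thing to prove is that with positive probability $y$ has $\Omega(n)$ coordinates equal to $\pm 1$. First I would argue that the Gaussian measure hypothesis $\gamma_n(K) \geq e^{-n/500}$, together with standard Gaussian concentration of the norm, implies that a typical Gaussian $x$ is at distance roughly $\Theta(\sqrt n)$ from $K \cap [-1,1]^n$ — more precisely, I would lower bound the probability that $\mathrm{dist}(x, K\cap[-1,1]^n)^2$ is comparable to $n$. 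Intuitively, since $K$ itself occupies only an $e^{-\Theta(n)}$ fraction of Gaussian space, $x$ has to move by order $\sqrt n$ to reach $K$; the box $[-1,1]^n$ only helps a little. I would make this rigorous using the Gaussian isoperimetric inequality (or the fact that $\gamma_n(K_t) \to 1$ where $K_t$ is the $t$-neighborhood of $K$, so if $\gamma_n(K)$ is tiny then $\gamma_n(K_t)$ is still bounded away from $1$ for $t = c\sqrt n$).

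The heart of the argument is a rigidity/stability property of Euclidean projection onto $K \cap [-1,1]^n$: I claim that if $y = \Pi_{K\cap[-1,1]^n}(x)$, then the coordinates of $y$ that are \emph{not} at $\pm 1$ form a "soft" direction in which $x-y$ carries little energy, whereas the total energy $\|x - y\|_2^2$ is large (by the previous paragraph). Concretely, decompose $[n] = J \cup \bar J$ where $J = \{i : |y_i| = 1\}$. Using the variational (KKT) characterization of the projection — namely that $x - y$ is an outer normal to $K \cap [-1,1]^n$ at $y$ — one shows that the component of $x - y$ along the free coordinates $\bar J$ is constrained to lie in the normal cone of $K$ alone at $y$ (the box constraints being slack there). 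Here the largeness of $K$ enters again: a large convex body cannot have a "sharp" normal cone on a large-measure portion of its boundary, so the projection of the Gaussian onto the free coordinates cannot be large. Quantitatively, I expect a bound of the shape $\mathbb{E}\big[\,\|(x-y)_{\bar J}\|_2^2\,\big] \lesssim \log(1/\gamma_n(K)) + |\bar J|$ or similar, obtained by comparing $\gamma_n(K)$ with the Gaussian mass of a slab/cone around $y$. Combining "$\|x-y\|_2^2 \gtrsim n$ with good probability" against "$\|(x-y)_{\bar J}\|_2^2$ is small in expectation," and recalling that $|(x-y)_i| \le |x_i| + 1$ on $J$ with $\sum_{i\in J} x_i^2 \approx |J|$ by concentration, forces $|J| = \Omega(n)$.

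The main obstacle I anticipate is making the second step — the bound on $\|(x-y)_{\bar J}\|_2^2$ via the normal-cone-of-$K$ argument — both correct and quantitatively strong enough to survive against the $\Omega(n)$ lower bound on $\|x-y\|_2^2$; this is where the precise constant $1/500$ versus $1/9000$ will come from, and it requires carefully relating the local geometry of $K$ at the (random) point $y$ to the global measure $\gamma_n(K)$. A clean way to do this is probably to avoid conditioning on $y$ and instead work with the function $f(x) = \tfrac12\,\mathrm{dist}(x, K\cap[-1,1]^n)^2$, which is convex and $1$-Lipschitz-gradient with $\nabla f(x) = x - \Pi(x)$; then Gaussian concentration and integration-by-parts identities (Gaussian Poincaré / the fact that $\mathbb{E}\|\nabla f\|^2$ and $\mathbb{E}[\Delta f]$ are controlled) let one bound the relevant quantities in aggregate without ever describing the KKT conditions pointwise. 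I would also need the routine facts that $\mathrm{Pr}[|\,\|x\|_2^2 - n\,| > \varepsilon n]$ is exponentially small and that $\mathrm{Pr}[\,|\{i : x_i^2 > c\}| > \alpha n\,]$ can be controlled, to handle the contribution of the tight coordinates. Finally, a standard repetition argument boosts the "positive probability" to "high probability," giving the claimed randomized polynomial-time algorithm.
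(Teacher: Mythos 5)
Your algorithm is the paper's algorithm, and the overall shape of the argument (the Gaussian must be far from the feasible set unless many cube constraints are tight) is the right one, but as written the proposal has two problems, one of them fatal. First, the justification of the distance lower bound is backwards: the hypothesis $\gamma_n(K)\ge e^{-n/500}$ is a \emph{lower} bound on the measure of $K$ (indeed $K$ could be all of $\setR^n$), so you cannot argue that ``$K$ occupies only an $e^{-\Theta(n)}$ fraction of space, hence $x$ must move by $\Omega(\sqrt n)$.'' The largeness of $d(x^*,K\cap[-1,1]^n)$ comes entirely from the cube, whose measure is $\Phi(1)^n$: with probability $1-e^{-\Omega(n)}$ at least $n/25$ coordinates have $|x_i^*|\ge 2$, so $d(x^*,[-1,1]^n)\ge \frac{1}{5}\sqrt{n}$. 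Moreover the tool you invoke would prove the opposite of what you want: by isoperimetry a set of measure $e^{-\varepsilon n}$ already has $\gamma_n(K_{3\sqrt{\varepsilon n}})\ge 1-e^{-\varepsilon n}$ (this is exactly Lemma~\ref{lem:DistanceFromBody}), so even a small-measure convex set has almost all Gaussian mass within $O(\sqrt{n})$ of it. This step is easily repaired, but it signals a misreading of the hypothesis.

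Second, and more seriously, the heart of your plan --- the ``rigidity'' bound $\E\bigl[\|(x-y)_{\bar J}\|_2^2\bigr]\lesssim \log(1/\gamma_n(K))+|\bar J|$ via normal cones or Gaussian Poincar\'e --- is not established, and as stated it is quantitatively useless: the $+|\bar J|$ term can be of order $n$, the same order as the total energy lower bound $\|x-y\|_2^2\ge n/25$, so no contradiction follows unless you control its constant, which you do not. Your auxiliary claim $\sum_{i\in J}x_i^2\approx |J|$ also fails, since $J$ depends on $x$ and the tight coordinates are biased towards large $|x_i|$ (a worst-case subset of size $\alpha n$ carries $\Theta(\alpha\log(1/\alpha))\,n$ of the norm); fixing this already requires a union bound over subsets. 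The paper resolves exactly the point you leave open by a different and concrete mechanism: (i) since only tight constraints matter (Lemma~\ref{lem:ConvexOptimization}), $d(x^*,K\cap[-1,1]^n)=d(x^*,K(I^*))$ where $K(I^*)=K\cap\{|x_i|\le 1\ \forall i\in I^*\}$ keeps only the tight cube constraints; (ii) if $|I^*|\le\varepsilon n$, the {\v S}id\'ak--Khatri inequality (Lemma~\ref{lem:SidakKhatri}) gives $\gamma_n(K(I^*))\ge \gamma_n(K)\,e^{-\varepsilon n/2}\ge e^{-2\delta n}$; (iii) Lemma~\ref{lem:DistanceFromBody} then places $x^*$ within $3\sqrt{2\delta n}<\frac{1}{5}\sqrt{n}$ of $K(I)$; and (iv) a union bound over the at most $e^{\delta n}$ sets $I$ of size $\le\varepsilon n$ (this is why $\delta=\frac{3}{2}\varepsilon\log_2(1/\varepsilon)$) removes the dependence of $I^*$ on $x^*$. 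Without an analogue of (i)--(iv), your middle step remains a conjecture about normal cones of large bodies that you have not proved and that your aggregate Poincar\'e strategy does not obviously yield, so the proposal has a genuine gap.
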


Our method is extremely simple: 
\begin{center}
\psframebox{
\begin{minipage}[b]{8.5cm}
{\bf Algorithm: } \vspace{1mm} \hrule \vspace{2mm}
\begin{enumerate}
\item[(1)] take a random Gaussian vector $x^* \sim N^n(0,1)$\vspace*{-2mm}
\item[(2)] compute the point \\ $y^* = \textrm{argmin}\{ \| x^* - y\|_2 \mid y \in K \cap [-1,1]^n\}$\vspace*{-2mm} 
\item[(3)] return $y^*$
\end{enumerate}
\end{minipage}
}
\psset{unit=1.5cm}
\begin{pspicture}(-2.4,-0.9)(1.5,1.2)
\rput[c]{20}(0,0){\psellipse[fillstyle=solid,fillcolor=lightgray](0,0)(2,0.6)}
\drawRect{fillstyle=vlines,fillcolor=lightgray,hatchcolor=gray}{-1}{-1}{2}{2}
\rput[c](1.3,0.5){\psframebox[framesep=2pt,fillstyle=solid,fillcolor=lightgray,linestyle=none]{$K$}}
\cnode*(0,0){2.5pt}{origin} \nput[labelsep=2pt]{90}{origin}{\psframebox[fillstyle=solid,fillcolor=lightgray,framesep=2pt,linestyle=none]{$\bm{0}$}}
\cnode*(1.8,-0.5){2.5pt}{x} \nput{0}{x}{$x^*$}
\cnode*(1,-0.2){2.5pt}{y} \nput[labelsep=0pt]{150}{y}{\psframebox[fillstyle=solid,fillcolor=lightgray,framesep=1pt,linestyle=none]{$y^*$}}
\rput[l](-0.95,1.2){$[-1,1]^n$}
\ncline[arrowsize=6pt,linewidth=1pt]{<->}{x}{y}
\end{pspicture}
\end{center}
\vspace{2mm}
In fact, the probability that the point $y^*$ satisfies the claim of Theorem~\ref{thm:MainTheorem}
is $1-2^{-\Omega(n)}$.

After the publication of the conference version of this paper, Eldan and Singh~\cite{EldanSinghArxiv14} 
discovered the following alternative algorithm: 
given a large enough symmetric convex body $K \subseteq \setR^n$, take a uniform random direction $c$ and optimize
the program $\max\{ cx \mid x \in K \cap [-1,1]^n\}$. The optimum solution $y$ will 
again have a constant fraction of coordinates in $\{-1,1\}$ with high probability.

\section{Preliminaries}

In the following, we write $x \sim N(0,1)$ if $x$ is a \emph{Gaussian random variable} with expectation $\E[x] = 0$ and 
variance $\E[x^2]=1$. By $N^n(0,1)$ we denote the \emph{$n$-dimensional Gauss distribution} and  
$\gamma_n$ denotes the corresponding measure with density $\frac{1}{(2\pi)^{n/2}} e^{-\|x\|_2^2/2}$ for $x \in \setR^n$. In other words, $\gamma_n(K) = \Pr_{x \sim N^n(0,1)}[x \in K]$ whenever $K$ is a measurable set. 
In fact, all sets $K$ that we deal with will be closed and convex and thus trivially measurable.

For a convex set $K$, let $d(x,K) := \min\{ \|x-y\|_2 \mid y \in K\}$ be the \emph{distance} of $x$
to $K$ and for $\delta \geq 0$, let $K_{\delta} := \{ x \in \setR^n \mid d(x,K) \leq \delta\}$
be the set of points that have at most distance $\delta$ to $K$ (in particular $K \subseteq K_{\delta}$).
A \emph{half-space} is a set of the form $H := \{ x \in \setR^n \mid \left<v,x\right> \leq \lambda \}$ for some
$v \in \setR^n$ and $\lambda \in \setR$.
The key theorem on Gaussian measure that we need is the \emph{Gaussian Isoperimetric inequality}
(see e.g. \cite{ProbabilityInBanachSpace-TalagrandLedoux91} for a proof):
\begin{theorem}
Let $K \subseteq \setR^n$ be a measurable set and $H$ be a halfspace so that $\gamma_n(K) = \gamma_n(H)$.
Then for any $\delta \geq 0$, $\gamma_n(K_{\delta}) \geq \gamma_n(H_{\delta})$.
\end{theorem}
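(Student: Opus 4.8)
The plan is to reduce the $n$-dimensional statement to a one-dimensional differential inequality for the map $\delta \mapsto \gamma_n(K_{\delta})$, via the functional (Bobkov) form of Gaussian isoperimetry. Write $\varphi(t) = \frac{1}{\sqrt{2\pi}}e^{-t^2/2}$ and $\Phi(t) = \int_{-\infty}^{t}\varphi$ for the standard one-dimensional Gaussian density and distribution function, and let $I := \varphi \circ \Phi^{-1} : [0,1] \to [0,\infty)$ be the Gaussian isoperimetric profile; the only facts I would use about $I$ are that it is continuous and concave with $I(0) = I(1) = 0$, that $I' = -\Phi^{-1}$, and hence that $I \cdot I'' = -1$. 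A halfspace orthogonal to a unit vector, whose bounding hyperplane lies at signed distance $a$ from the origin, has $\gamma_n(H) = \Phi(a)$, and $H_{\delta}$ is the halfspace at signed distance $a + \delta$, so $\gamma_n(H_{\delta}) = \Phi(\Phi^{-1}(\gamma_n(H)) + \delta)$. Since the Gaussian measure of a halfspace ranges over all of $(0,1)$, the theorem is equivalent to the intrinsic inequality
\[
  \gamma_n(K_{\delta}) \ \geq \ \Phi\!\big(\Phi^{-1}(\gamma_n(K)) + \delta\big), \qquad \delta \geq 0 ,
\]
and it is this form I would prove. By inner regularity of $\gamma_n$ --- approximate $K$ from inside by closed sets $F$, use $\gamma_n(K_{\delta}) \geq \gamma_n(F_{\delta})$, and pass to the limit using continuity of $p \mapsto \Phi(\Phi^{-1}(p)+\delta)$ on $(0,1)$ --- it suffices to take $K$ closed.

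First I would prove Bobkov's inequality: for every locally Lipschitz $f : \setR^n \to [0,1]$,
\[
  I\!\left(\int f\, d\gamma_n\right) \ \leq \ \int \sqrt{\,I(f)^2 + \|\nabla f\|_2^2\,}\, d\gamma_n .
\]
My preferred route uses the Ornstein--Uhlenbeck semigroup $(P_t)_{t \geq 0}$, given by $P_t f(x) = \int f(e^{-t}x + \sqrt{1-e^{-2t}}\,y)\,d\gamma_n(y)$, with generator $L = \Delta - \langle x, \nabla\rangle$, invariant measure $\gamma_n$, ergodicity $P_t f \to \int f\,d\gamma_n$ as $t \to \infty$, and the commutation identity $\nabla P_t f = e^{-t}P_t(\nabla f)$. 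For smooth $f$ bounded away from $0$ and $1$, set $\psi(t) := \int \sqrt{I(P_t f)^2 + \|\nabla P_t f\|_2^2}\, d\gamma_n$. Differentiating in $t$, substituting $\partial_t P_t f = L P_t f$, using the commutation of $\nabla$ with $L$, and integrating by parts against $\gamma_n$, one finds $\psi'(t) \leq 0$ --- the sign comes out right precisely because $I$ solves $I I'' = -1$, while a Cauchy--Schwarz step in $\setR^n$ handles the vector-valued gradient term. Then $\psi(0) = \int\sqrt{I(f)^2 + \|\nabla f\|_2^2}\,d\gamma_n \geq \lim_{t\to\infty}\psi(t) = I\!\left(\int f\,d\gamma_n\right)$, and a routine smoothing/truncation extends this to all locally Lipschitz $f : \setR^n \to [0,1]$. (Alternatively, Bobkov's inequality can be obtained by tensorizing a two-point inequality on $\{0,1\}$ and passing to the Gaussian limit by the central limit theorem, trading the semigroup computation for a delicate explicit inequality.)

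Next I would pass from functions to sets. Given $K$ closed and $\varepsilon > 0$, apply Bobkov's inequality to the $\tfrac{1}{\varepsilon}$-Lipschitz cutoff $f_{\varepsilon}(x) := \max\{0,\, 1 - \tfrac{1}{\varepsilon}\, d(x,K)\}$, which equals $1$ on $K$ and vanishes off $K_{\varepsilon}$. Since $I(0) = I(1) = 0$ and $I$ is bounded, the right-hand side is at most $(\gamma_n(K_{\varepsilon}) - \gamma_n(K))\cdot\sqrt{I(\tfrac{1}{2})^2 + \varepsilon^{-2}}$, while the left-hand side tends to $I(\gamma_n(K))$ as $\varepsilon \to 0$; hence $\liminf_{\delta\to 0^+}\frac{\gamma_n(K_{\delta}) - \gamma_n(K)}{\delta} \geq I(\gamma_n(K))$. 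Applying this with $K_{\delta}$ in place of $K$ --- valid since $(K_{\delta})_h = K_{\delta+h}$, because $d(x,K_{\delta}) = \max\{0,\, d(x,K)-\delta\}$ --- the nondecreasing function $g(\delta) := \gamma_n(K_{\delta})$ satisfies $\liminf_{h\to 0^+}\frac{g(\delta+h)-g(\delta)}{h} \geq I(g(\delta))$ for every $\delta \geq 0$; it is moreover continuous, since $\{x : d(x,K) = \delta\}$ is Lebesgue-null for every $\delta > 0$ ($d(\cdot,K)$ being $1$-Lipschitz with $\|\nabla d(\cdot,K)\|_2 = 1$ a.e.\ on $\{x : d(x,K) > 0\}$). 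Because $\Phi^{-1}\circ g$ is nondecreasing and, wherever $g$ is differentiable (a.e., by monotonicity), $(\Phi^{-1}\circ g)' = g'/I(g) \geq 1$, integrating from $0$ to $\delta$ gives $\Phi^{-1}(\gamma_n(K_{\delta})) \geq \Phi^{-1}(\gamma_n(K)) + \delta$, which is the intrinsic inequality --- hence, by the reductions above, the theorem.

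The hard part is the functional inequality. The $n$-dimensionality is essentially free, thanks to the commutation $\nabla P_t f = e^{-t}P_t\nabla f$ together with a Cauchy--Schwarz step, but pinning down the sign of $\psi'(t)$ requires carrying the integration by parts through carefully, and it is the one place where the identity $I I'' = -1$ --- that is, Gaussianity, as opposed to mere log-concavity --- is essential. Two alternatives avoid the semigroup at the cost of other heavy machinery: deduce the result from the spherical isoperimetric inequality (itself proved by two-point symmetrization plus a compactness argument) via Poincaré's limit $\sqrt{N}\,S^{N-1} \to \gamma_n$ --- the route taken in \cite{ProbabilityInBanachSpace-TalagrandLedoux91} --- or deduce it from Ehrhard's Gaussian Brunn--Minkowski inequality $\Phi^{-1}(\gamma_n(\lambda A + (1-\lambda)B)) \geq \lambda\,\Phi^{-1}(\gamma_n(A)) + (1-\lambda)\,\Phi^{-1}(\gamma_n(B))$ by taking $B$ a large Euclidean ball and $\lambda \to 1$. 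The regularity points in the last step are standard, and since only a lower bound on $\gamma_n(K_{\delta})$ is asked, they could be avoided altogether by first approximating $K$ by smooth convex bodies, for which the Gaussian perimeter is classical.
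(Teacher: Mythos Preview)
The paper does not prove this statement at all: it is quoted as a classical fact (the Gaussian isoperimetric inequality) with a reference to Ledoux--Talagrand \cite{ProbabilityInBanachSpace-TalagrandLedoux91} for a proof, and is then used as a black box to derive Lemma~\ref{lem:DistanceFromBody}. So there is no ``paper's own proof'' to compare against.

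Your outline is a correct and standard route --- essentially the Bakry--Ledoux semigroup proof of Bobkov's functional inequality, followed by the usual passage from the functional form to the set form via Lipschitz cutoffs and the differential inequality for $\delta \mapsto \gamma_n(K_{\delta})$. For what it is worth, the reference the paper cites takes the other route you mention (Poincar\'e's limit from the spherical isoperimetric inequality), so your approach is genuinely different from the one the paper points to, and arguably more self-contained. One small technical remark: in your last step you write ``wherever $g$ is differentiable (a.e., by monotonicity), $(\Phi^{-1}\circ g)' \geq 1$, integrating from $0$ to $\delta$ \ldots''. Monotone plus a.e.\ differentiable does not by itself give the fundamental theorem of calculus; the clean way to finish is to observe that the \emph{lower right Dini derivative} of the continuous nondecreasing function $\Phi^{-1}\circ g$ is $\geq 1$ at every point (which follows from your $\liminf$ bound together with continuity of $g$ and the chain rule for Dini derivatives), and then invoke the elementary fact that a continuous function with lower right Dini derivative $\geq 1$ everywhere increases at least linearly. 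This avoids any absolute-continuity issues.
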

A simple consequence is that any set $K$ that is not too small, is close to almost all the measure\footnote{Instead of using the
Gaussian isoperimetric inequality, one can prove Lemma~\ref{lem:DistanceFromBody} also using the well-known
\emph{measure concentration} inequality for Gaussian space: given a 1-Lipschitz function $F : \setR^n \to \setR$ 
(i.e. $|F(x) - F(y)| \leq \|x- y\|_2$) one has $\Pr_{x \sim N^n(0,1)}[|F(x) - \mu| > \lambda] \leq 2e^{-\lambda^2/2}$
with $\mu = \E_{x \sim N^n(0,1)}[F(x)]$. One can then choose $F(x) := d(x,K)$ with $\lambda := \frac{3}{2}\sqrt{\varepsilon n}$ 
and one obtains  $\Pr[|d(x,K) - \mu| > \frac{3}{2}\sqrt{\varepsilon n}] \leq 2e^{-\frac{9}{8}\varepsilon n} < e^{-\varepsilon n}$ for $n$ large enough. Since $\gamma_n(K) \geq e^{-\varepsilon n}$, 
we know that $\mu \leq \frac{3}{2}\sqrt{\varepsilon n}$ and thus $\Pr[d(x,K) > 2 \cdot \frac{3}{2}\sqrt{\varepsilon n}] \leq e^{-\varepsilon n}$ as claimed.}.
\begin{lemma} \label{lem:DistanceFromBody}
Let $\varepsilon > 0$. Then for any measurable set $K$
with $\gamma_n(K) \geq e^{-\varepsilon n}$ one has $\gamma_n(K_{3\sqrt{\varepsilon n}}) \geq 1 - e^{-\varepsilon n}$.
\end{lemma}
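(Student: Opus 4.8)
The plan is to invoke the Gaussian isoperimetric inequality and reduce the whole statement to a one‑dimensional estimate for half‑spaces. First I would pick a half‑space $H = \{ x \in \setR^n : x_1 \le a\}$ with $\gamma_n(H) = \gamma_n(K)$; such an $a \in \setR$ exists because $a \mapsto \gamma_n(\{x_1 \le a\}) = \Phi(a)$ is continuous and maps onto $(0,1)$, where $\Phi$ is the standard one–dimensional Gaussian CDF. By the isoperimetric inequality, $\gamma_n(K_{\delta}) \ge \gamma_n(H_{\delta})$ for every $\delta \ge 0$, and since $H_{\delta} = \{ x_1 \le a+\delta\}$ we have $\gamma_n(H_{\delta}) = \Phi(a+\delta)$. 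So it suffices to show $\Phi\big(a + 3\sqrt{\varepsilon n}\big) \ge 1 - e^{-\varepsilon n}$.

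Next I would lower‑bound $a$. The hypothesis gives $\Phi(a) = \gamma_n(K) \ge e^{-\varepsilon n}$. Using the standard Gaussian tail bound $1-\Phi(t) \le e^{-t^2/2}$ for $t \ge 0$ (equivalently $\Phi(-t)\le e^{-t^2/2}$), in the case $a < 0$ we get $e^{-\varepsilon n} \le \Phi(a) \le e^{-a^2/2}$, which forces $a^2/2 \le \varepsilon n$, i.e.\ $a \ge -\sqrt{2\varepsilon n}$; and $a \ge -\sqrt{2\varepsilon n}$ holds trivially when $a \ge 0$. Hence $a + 3\sqrt{\varepsilon n} \ge (3-\sqrt 2)\sqrt{\varepsilon n} \ge \tfrac32 \sqrt{\varepsilon n} > 0$.

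Finally, applying the same tail bound in the other direction, $\gamma_n(H_{3\sqrt{\varepsilon n}}) = \Phi\big(a + 3\sqrt{\varepsilon n}\big) \ge \Phi\big(\tfrac32\sqrt{\varepsilon n}\big) = 1 - \big(1 - \Phi(\tfrac32\sqrt{\varepsilon n})\big) \ge 1 - e^{-\frac98 \varepsilon n} \ge 1 - e^{-\varepsilon n}$, and the isoperimetric comparison $\gamma_n(K_{3\sqrt{\varepsilon n}}) \ge \gamma_n(H_{3\sqrt{\varepsilon n}})$ gives the lemma.

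There is no real obstacle: the entire content is the isoperimetric inequality, which we may assume. The only thing to watch is the bookkeeping with constants — one has to check that the slack $3$ in $3\sqrt{\varepsilon n}$ is enough to absorb both the possible leftward shift $\sqrt{2\varepsilon n}$ of the extremal half‑space and the loss in passing from $e^{-\frac98\varepsilon n}$ back to $e^{-\varepsilon n}$; since $3-\sqrt2 \ge \tfrac32$ and $\tfrac98 \ge 1$, both steps go through for every $n$ and every $\varepsilon > 0$. (An alternative route, already sketched in the footnote, replaces isoperimetry by Gaussian measure concentration for the $1$‑Lipschitz function $x \mapsto d(x,K)$; I would mention it but keep the isoperimetric argument as the main proof.)
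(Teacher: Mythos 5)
Your proof is correct and takes essentially the same route as the paper: compare $K$ with a half-space of equal Gaussian measure via the isoperimetric inequality, bound how far that half-space can be shifted into the negative direction using the tail estimate $\int_t^\infty \tfrac{1}{\sqrt{2\pi}}e^{-x^2/2}\,dx \le e^{-t^2/2}$, and conclude that the $3\sqrt{\varepsilon n}$-enlargement has measure at least $1-e^{-\frac{9}{8}\varepsilon n} \ge 1-e^{-\varepsilon n}$. The only differences are cosmetic: you bound the shift by $\sqrt{2\varepsilon n}$ instead of the paper's $\tfrac{3}{2}\sqrt{\varepsilon n}$ and spell out the final symmetry step that the paper compresses into one line.
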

\begin{proof}
We assume that indeed $\gamma_n(K) = e^{-\varepsilon n} \leq \frac{1}{2}$.
Choose $\lambda \in \setR$ so that the halfspace $H = \{ x \in \setR^n \mid x_1 \leq \lambda\}$ has
measure $\gamma_n(H) = \gamma_n(K)$ (note that $\lambda \leq 0$). First, we claim that $|\lambda| \leq \frac{3}{2} \sqrt{\varepsilon n}$. 
This follows from 
\[
 \int_{-\infty}^{-\frac{3}{2} \sqrt{\varepsilon n}} \frac{1}{\sqrt{2\pi}} e^{-x^2/2}dx \leq 
e^{-\frac{9}{8}\varepsilon n}
\leq e^{-\varepsilon n}
\]
using the estimate 
$\int_t^{\infty} \frac{1}{\sqrt{2\pi}} e^{-x^2/2} dx \leq e^{-t^2/2}$ for all $t \geq 0$.
By symmetry, we get $\gamma_n(K_{3\sqrt{\varepsilon n}}) \geq 1 - e^{-\varepsilon n}$.
\end{proof}
For a vector $v \in \setR^n$ and $\lambda \geq 0$, the set $S = \{ x \in \setR^n : \left|\left<v,x\right>\right| \leq \lambda \}$ is called a \emph{strip}. If $v$ is a unit vector, then the strip has width $2\lambda$ and $\gamma_n(S) = \Phi(\lambda)$
where we define $\Phi(\lambda) := \int_{-\lambda}^{\lambda} \frac{1}{\sqrt{2\pi}} e^{-x^2/2}dx$. Useful estimates are  $\Phi(1) \geq e^{-1/2}$ and $\Phi(\lambda) \geq 1- e^{-\lambda^2/2}$ for all $\lambda \geq 0$. 

A convex body is called \emph{symmetric} if $x \in K \Leftrightarrow -x \in K$.
It is a convenient fact, that if we intersect a symmetric convex body with a strip, 
the measure decreases only slightly. 
\begin{lemma}[{\v{S}}id{\'a}k~\cite{SidaksLemma67}, Khatri~\cite{KhatriCorrelationInequality67}] \label{lem:SidakKhatri}
Let $K \subseteq \setR^n$ be a symmetric convex body and $S \subseteq \setR^n$ be a strip. 
Then $\gamma_n(K \cap S) \geq \gamma_n(K) \cdot \gamma_n(S)$.
\end{lemma}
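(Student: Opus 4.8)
The plan is to reduce the statement to a one-dimensional correlation inequality about the parallel slices of $K$. First, since $\gamma_n$ is rotationally invariant, I would apply an orthogonal change of coordinates so that $S = \{x \in \setR^n : |x_1| \le \lambda\}$; this alters none of $\gamma_n(K)$, $\gamma_n(S)$, $\gamma_n(K\cap S)$ and preserves both convexity and central symmetry of $K$. Write $x = (t,y)$ with $t \in \setR$, $y \in \setR^{n-1}$, let $\varphi(t) = \frac{1}{\sqrt{2\pi}}e^{-t^2/2}$ and let $\varphi_{n-1}$ be the $(n-1)$-dimensional Gaussian density, and set $g(t) := \gamma_{n-1}(\{y : (t,y) \in K\})$. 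By Fubini, $\gamma_n(K) = \int_{\setR} g(t)\varphi(t)\,dt$, $\gamma_n(K\cap S) = \int_{-\lambda}^{\lambda} g(t)\varphi(t)\,dt$, and $\gamma_n(S) = \int_{-\lambda}^{\lambda}\varphi(t)\,dt$. A one-line algebraic manipulation (using $\int_{\setR}\varphi = 1$) shows that $\gamma_n(K\cap S) \ge \gamma_n(K)\gamma_n(S)$ is equivalent to
\[
  \frac{\int_{[-\lambda,\lambda]} g\varphi}{\int_{[-\lambda,\lambda]} \varphi} \;\ge\; \frac{\int_{\{|t|>\lambda\}} g\varphi}{\int_{\{|t|>\lambda\}} \varphi},
\]
i.e.\ the Gaussian average of $g$ over the central strip is at least its Gaussian average over the tails.

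Next I would establish the two structural properties of $g$ that force this. Central symmetry of $K$ gives $\{y : (-t,y)\in K\} = -\{y : (t,y)\in K\}$, and since $\gamma_{n-1}$ is invariant under $y\mapsto -y$, the function $g$ is even. Moreover $(t,y)\mapsto \mathbf{1}_K(t,y)\,\varphi_{n-1}(y)$ is a product of log-concave functions (the indicator of a convex set times a Gaussian density), hence log-concave on $\setR^n$; by the Pr\'ekopa--Leindler inequality its marginal $g(t) = \int \mathbf{1}_K(t,y)\varphi_{n-1}(y)\,dy$ is log-concave on $\setR$. An even log-concave function has $\log g$ even and concave, which makes $\log g$, and hence $g$, non-increasing on $[0,\infty)$; so $g(t) = \tilde g(|t|)$ for a non-increasing $\tilde g$. (If $g(0)=0$, evenness together with log-concavity forces $g\equiv 0$ a.e., hence $\gamma_n(K)=0$ and the inequality is trivial.)

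Finally I would close with a stochastic-domination argument. Under the standard one-dimensional Gaussian, the conditional distribution of $|x_1|$ given $|x_1|\le\lambda$ is supported on $[0,\lambda]$ and is stochastically dominated by the conditional distribution of $|x_1|$ given $|x_1|>\lambda$, which is supported on $(\lambda,\infty)$. Since $\tilde g$ is non-increasing, $\E[\tilde g(|x_1|)\mid |x_1|\le\lambda] \ge \E[\tilde g(|x_1|)\mid |x_1|>\lambda]$, which is exactly the displayed inequality; undoing the equivalence yields $\gamma_n(K\cap S)\ge\gamma_n(K)\gamma_n(S)$.

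The only genuinely substantive step is showing that $g$ is non-increasing in $|t|$: this is where log-concavity of marginals (Pr\'ekopa--Leindler, equivalently Brunn--Minkowski applied to the parallel slices of $K$) enters, and it is the crux of the whole argument. The minor subtlety is handling the support of $g$ — an interval, possibly unbounded or meeting $0$ only on its boundary — when passing from ``$\log g$ even and concave'' to ``$g$ unimodal''. Everything else is Fubini, rotation invariance, and elementary algebra.
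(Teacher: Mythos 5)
Your proof is correct, but note that the paper itself does not prove this lemma at all: it is imported as a known result with citations to \v{S}id\'ak and Khatri, and nothing in the paper's text depends on any particular proof of it. What you have written is a complete, self-contained argument, and it is essentially the standard modern proof of the single-strip case: rotate so the strip is $\{|x_1|\le\lambda\}$, pass to the slice function $g(t)=\gamma_{n-1}(\{y:(t,y)\in K\})$, observe that $g$ is even (symmetry of $K$) and log-concave (Pr\'ekopa's theorem on marginals of log-concave functions, applied to $\mathbf{1}_K\cdot\varphi_{n-1}$), hence non-increasing in $|t|$, and then compare the conditional Gaussian averages of $g$ inside and outside $[-\lambda,\lambda]$, which is exactly the rearranged form of $\gamma_n(K\cap S)\ge\gamma_n(K)\gamma_n(S)$. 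All the steps check out: the algebraic equivalence $\gamma_n(K\cap S)\ge\gamma_n(K)\gamma_n(S)\Leftrightarrow \frac{1}{\gamma_n(S)}\int_{[-\lambda,\lambda]}g\varphi\ge\frac{1}{1-\gamma_n(S)}\int_{|t|>\lambda}g\varphi$ is right (with the degenerate cases $\gamma_n(S)\in\{0,1\}$ trivial), the deduction that an even log-concave function is non-increasing on $[0,\infty)$ is sound including your handling of zeros of $g$, and the final step needs only that the two conditional laws of $|x_1|$ live on $[0,\lambda]$ and $[\lambda,\infty)$ respectively. Compared with the paper's choice to cite the result, your route buys a transparent, elementary derivation whose only nontrivial ingredient is log-concavity of Gaussian marginals (Pr\'ekopa--Leindler/Brunn--Minkowski), which is exactly where the convexity and central symmetry of $K$ are used; this is weaker machinery than what is needed for the general Gaussian correlation inequality, and it makes clear why the strip case is easy while the full correlation conjecture (mentioned in the paper) was hard.
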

The still unproven \emph{correlation conjecture} suggests that this claim is true 
for any pair $K,S$ of symmetric convex sets. 
For more details on Gaussian measures, see the book of Ledoux and Talagrand~\cite{ProbabilityInBanachSpace-TalagrandLedoux91}.

For $0\leq \varepsilon \leq 1$, let $h(\varepsilon) = \varepsilon \log_2(\frac{1}{\varepsilon}) + (1-\varepsilon) \log_2(\frac{1}{1-\varepsilon})$ be the 
\emph{binary entropy function}. Recall that for $0 \leq \varepsilon \leq \frac{1}{2}$,
the number of subsets $I \subseteq \{ 1,\ldots,n\}$ of size $|I| \leq \varepsilon n$ is bounded by\footnote{The argument is as follows: Define $\mathcal{S} := \{ S \subseteq [n] \mid |S| \leq \varepsilon n\}$ and let $X$ be the characteristic vector of a uniform random element from $\mathcal{S}$. If we define $H(X)$ as the \emph{entropy} of the random variable, then $\log_2( |\mathcal{S}|) = H(X) \leq \sum_{i=1}^n H(X_i) = \sum_{i=1}^n h(\Pr[X_i=1]) \leq n \cdot h(\varepsilon)$ using subadditivity of entropy as well as the monotonicity of $h$ on the interval $[0,\frac{1}{2}]$. }  $2^{h(\varepsilon) n}$.
One can easily estimate that $2^{h(\varepsilon)} \leq e^{\frac{3}{2} \varepsilon \log_2 (\frac{1}{\varepsilon})}$  which provides us with a 
 bound for later. 

A simple fact about convexity is that the optimum solution to a convex optimization problem does not change if we discard constraints that are not tight for the optimum.
Note that a function $g : \setR^n \to \setR$ is called \emph{strictly convex} if $g(\lambda x + (1-\lambda) y) < \lambda \cdot g(x) + (1-\lambda) \cdot g(y)$ for all $x,y \in \setR^n$ and $0<\lambda <1$.
\begin{lemma} \label{lem:ConvexOptimization}
Let $P,Q \subseteq \setR^n$ be convex sets and let $g : \setR^n \to \setR$
be a strictly convex function. Suppose that $x^*$ is an optimum solution to $\min\{ g(x) \mid x \in P \cap Q\}$
and $x^*$ lies in the interior of $Q$. Then $x^*$ is also an optimum solution to 
$\min\{ g(x) \mid x \in P\}$.
\end{lemma}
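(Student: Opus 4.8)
The plan is to prove Lemma~\ref{lem:ConvexOptimization} directly from the definitions, using the fact that $x^*$ being in the interior of $Q$ lets me move a small step from $x^*$ in any direction while staying inside $Q$; strict convexity of $g$ then gives a contradiction if $x^*$ were not optimal over $P$ alone.

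\begin{proof}[Proof sketch]
Suppose for contradiction that $x^*$ is not optimal for $\min\{g(x) \mid x \in P\}$. Then there exists $z \in P$ with $g(z) < g(x^*)$. Consider the segment $x_\lambda := (1-\lambda) x^* + \lambda z$ for $\lambda \in [0,1]$. Since $x^*, z \in P$ and $P$ is convex, $x_\lambda \in P$ for all such $\lambda$. By strict convexity of $g$ (applied with the two points $x^*$ and $z$), for $0 < \lambda < 1$ we get
\[
  g(x_\lambda) < (1-\lambda) g(x^*) + \lambda g(z) < g(x^*),
\]
where the second inequality uses $g(z) < g(x^*)$. On the other hand, since $x^*$ lies in the interior of $Q$, there is some $\lambda_0 \in (0,1)$ small enough that $x_{\lambda_0} \in Q$ (the point $x_{\lambda_0}$ lies within distance $\lambda_0 \|z - x^*\|_2$ of $x^*$, which can be made smaller than the radius of a ball around $x^*$ contained in $Q$). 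Then $x_{\lambda_0} \in P \cap Q$ and $g(x_{\lambda_0}) < g(x^*)$, contradicting the optimality of $x^*$ for $\min\{g(x) \mid x \in P \cap Q\}$.
\end{proof}

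The only slightly delicate point is handling the edge cases $P \cap Q = \{x^*\}$ or $z$ already lying on the boundary of $Q$; both are harmless because we only ever use an interior point $x_{\lambda_0}$ strictly between $x^*$ and $z$, and interiority of $x^*$ in $Q$ guarantees such a point exists regardless of where $z$ sits. I expect no real obstacle here — the argument is the standard ``discard non-tight constraints'' observation, and strict convexity is exactly the hypothesis that makes the strict inequality $g(x_\lambda) < g(x^*)$ go through cleanly without needing to worry about $g$ being constant along the segment.

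Note that in the intended application, $g(x) = \|x^* - x\|_2^2$ (or the Euclidean distance itself, suitably handled), which is indeed strictly convex, $P = K$, and $Q = [-1,1]^n$ or a face thereof; the lemma then says that the projection onto $K \cap [-1,1]^n$ coincides with the projection onto $K$ alone whenever the former lands in the interior of the box, which is what lets the analysis reduce questions about $K \cap [-1,1]^n$ to questions about $K$.
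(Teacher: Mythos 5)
Your proposal is correct and follows essentially the same argument as the paper: assume a better point $z \in P$ exists, move along the segment from $x^*$ toward $z$, use the interiority of $x^*$ in $Q$ to stay in $P \cap Q$ for a small step, and conclude a contradiction from convexity of $g$ (your spelled-out choice of $\lambda_0$ is just a more explicit version of the paper's one-line proof). A minor aside: strict convexity is not actually needed for this step, since ordinary convexity already yields $g(x_{\lambda_0}) \leq (1-\lambda_0)g(x^*) + \lambda_0 g(z) < g(x^*)$.
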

\begin{proof}
Suppose for the sake of contradiction that there is a $y^* \in P$ with $g(y^*) < g(x^*)$, 
then some convex combination $(1-\lambda)y^* + \lambda x^*$ with $0<\lambda<1$ lies also in $Q$
and has a better objective function than $x^*$, which is a contradiction.
\end{proof}

\section{Proof of the main theorem}

Now we have everything to analyze the algorithm. 
\begin{theorem} \label{thm:MainTechnicalTheorem}
Let $0 < \varepsilon \leq \frac{1}{9000}$ be a constant and $\delta := \frac{3}{2} \varepsilon \log_2(\frac{1}{\varepsilon})$. Suppose that $K \subseteq \setR^n$ is a symmetric, convex body with $\gamma_n(K) \geq e^{-\delta n}$. Choose a random Gaussian $x^* \sim N^n(0,1)$ and let $y^*$ be the point in $K \cap [-1,1]^n$ that
minimizes $\|x^* - y^*\|_2$. Then with probability $1-e^{-\Omega(n)}$, $y^*$ has at least $\varepsilon n$
many coordinates $i$ with $y_i^* \in \{ -1,1\}$.
\end{theorem}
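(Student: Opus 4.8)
The plan is to argue by contradiction: suppose that with non-negligible probability the closest point $y^*$ has fewer than $\varepsilon n$ tight coordinates. On that event, fix the set $I = \{ i : y_i^* \in \{-1,1\}\}$ of tight box-constraints; there are at most $\binom{n}{\varepsilon n} \leq 2^{h(\varepsilon) n} \leq 2^{\delta n/ (3/2 \cdot \varepsilon \log_2(1/\varepsilon)) \cdot \ldots}$ — more precisely at most $2^{h(\varepsilon)n} \leq e^{\delta n}$ — choices for $I$ by the entropy bound recalled in the preliminaries (with $\delta = \frac{3}{2}\varepsilon\log_2(\frac1\varepsilon)$). By a union bound it suffices to fix one such $I$ and show the corresponding probability is much smaller than $e^{-\delta n}$. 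Once $I$ is fixed, let $Q_I$ be the box constraints not in $I$ (an open neighborhood of $y^*$ in those coordinates) and let $P_I := K \cap \{x : x_i = y_i^* \text{ for } i \in I\}$, i.e. $K$ intersected with the affine coordinate subspace where the tight box constraints hold. Since $y^*$ lies in the interior of $Q_I$, Lemma~\ref{lem:ConvexOptimization} (applied with $g(x) = \|x^* - x\|_2^2$, which is strictly convex) tells us that $y^*$ is the closest point of $P_I$ to $x^*$, hence $d(x^*, P_I) = \|x^* - y^*\|_2$.

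Now I want to derive a contradiction from the fact that $x^*$ is simultaneously \emph{close} to $P_I$ and $P_I$ lives in a low-dimensional affine subspace, which should make $x^*$ \emph{far} from $P_I$ with overwhelming probability. The key geometric observation: $P_I \subseteq K$ is contained in the coordinate slab where $|I|$ coordinates are frozen to values of absolute value $1$. Decompose $x^* = x^*_I + x^*_{\bar I}$ according to the coordinates in $I$ and outside. The distance from $x^*$ to $P_I$ is at least the distance in the $I$-coordinates from $x^*_I$ to the single point $(y_i^*)_{i \in I}$, which is $\|x^*_I - y^*_I\|_2 = \sqrt{\sum_{i \in I}(x_i^* - y_i^*)^2}$. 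For a standard Gaussian, each $|x_i^*|$ exceeds, say, $\frac12$ with constant probability, and by a Chernoff/Hoeffding bound $\sum_{i \in I}(x_i^* - y_i^*)^2 \geq c|I| \geq c\varepsilon n$ except with probability $e^{-\Omega(n)}$ — but this alone is not enough, since $y^*$ being close to $x^*$ is plausible. The real point is to combine this with Lemma~\ref{lem:DistanceFromBody}: since $\gamma_n(K) \geq e^{-\delta n}$, with probability $\geq 1 - e^{-\delta n}$ we have $d(x^*, K) \leq 3\sqrt{\delta n}$, so if $y^*$ has few tight box constraints, the distance $\|x^* - y^*\|_2$ should be comparable to $d(x^*, K) = O(\sqrt{\delta n})$, while the frozen coordinates force it to be $\Omega(\sqrt{\varepsilon n})$; since $\delta = \frac32 \varepsilon \log_2(1/\varepsilon)$ is only a logarithmic factor larger than $\varepsilon$, I need to be careful that these do not simply both hold.

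The cleaner route, and the one I would actually pursue, is to bound the Gaussian measure of the set of ``bad'' $x^*$ directly via \v{S}id\'ak--Khatri. For a fixed $I$, the event that $y^*$ is the projection onto $P_I$ with all $|I|$ box constraints in $I$ tight is contained in the event that the projection of $x^*$ onto $K$ has $\geq |I|$ coordinates at $\pm 1$; equivalently $x^*$ lies in a union over sign patterns $\sigma \in \{\pm1\}^I$ of the normal cones of $K \cap [-1,1]^n$ at faces where $x_i = \sigma_i$. I would instead bound things by saying: if $d(x^*, P_I) = \|x^* - y^*\|$ is small, then $x^*$ is within distance $\|x^*-y^*\|$ of the slab $\{x_i = y_i^*, i \in I\}$, so $|x_i^* - y_i^*| \le \|x^*-y^*\|$ for each $i\in I$; in particular $|x_i^*| \ge 1 - \|x^*-y^*\|$. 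If we also know $\|x^*-y^*\| \le d(x^*,K) + (\text{slack})$ is not too large — and Lemma~\ref{lem:DistanceFromBody} gives $d(x^*,K)\le 3\sqrt{\delta n}$ whp, but we need $\|x^*-y^*\|$ bounded, which holds because $y^*$ is the box-truncation-constrained projection and $K\cap[-1,1]^n$ is nonempty and large — then each $|x_i^*|$ for $i \in I$ must be at least, say, $1 - O(\sqrt{\delta}/\sqrt{n}\cdot\sqrt n)$; this is not automatically $\ge$ constant. So the genuinely delicate accounting is to show $\|x^* - y^*\|_2 = O(\sqrt{\varepsilon n})$ on a $1-e^{-\Omega(n)}$ event (combining Lemma~\ref{lem:DistanceFromBody} with the fact that projecting onto the box only decreases distance relative to projecting onto $K$), conclude $\max_{i\in I}|x_i^* - y_i^*|$ could still be as large as $O(\sqrt{\varepsilon n})$ for a single coordinate, but $\sum_{i \in I}(x_i^*-y_i^*)^2 \le \|x^*-y^*\|_2^2 = O(\varepsilon n)$, and then run the union bound over the $\le 2^{h(\varepsilon)n}$ sets $I$ against the probability that a Gaussian has $\sum_{i\in I} x_i^{*2}$ small enough to be consistent; since $\E[\sum_{i\in I}x_i^{*2}] = |I| \ge \varepsilon n$ but consistency with $y^*_I \in \{\pm1\}^I$ and $\sum(x_i^*-y_i^*)^2 = O(\varepsilon n)$ forces $\sum_{i\in I}x_i^{*2}$ into a band of width $O(\varepsilon n)$ around $\varepsilon n$, which is a constant-probability-per-coordinate event and hence has probability $e^{-\Theta(\varepsilon n)}$, we win provided the constant in $\varepsilon \le \frac{1}{9000}$ makes $h(\varepsilon)n + (\text{log-factor slack}) < \Theta(\varepsilon n)$. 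The main obstacle is precisely this last quantitative balancing: showing the geometric ``distance is $O(\sqrt{\varepsilon n})$'' bound holds with probability $1 - e^{-\Omega(n)}$ \emph{and} that the resulting anti-concentration estimate beats the entropy count $2^{h(\varepsilon)n}$, which is why the hypothesis pins $\delta$ to $\frac32\varepsilon\log_2(1/\varepsilon)$ and $\varepsilon$ to at most $\frac{1}{9000}$.
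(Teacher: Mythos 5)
Your setup (the union bound over the at most $2^{h(\varepsilon)n}\le e^{\delta n}$ candidate sets $I$, and Lemma~\ref{lem:ConvexOptimization} to say that the distance from $x^*$ to $K\cap[-1,1]^n$ is determined by the tight constraints) matches the paper, but the contradiction you then try to extract cannot be made to work, and you yourself leave it open. Two concrete problems. First, the bound $\|x^*-y^*\|_2=O(\sqrt{\varepsilon n})$ that your anti-concentration step needs is never established and is not available from the tools you cite: Lemma~\ref{lem:DistanceFromBody} applied to $K$ bounds $d(x^*,K)$, not $d(x^*,K\cap[-1,1]^n)$ (the intersection itself has tiny measure), and the only upper bound on $\|x^*-y^*\|_2$ one can get is via \v{S}id\'ak--Khatri applied to $K(I)=K\cap\{x:|x_i|\le 1\ \forall i\in I\}$ (measure at least $e^{-2\delta n}$) followed by isoperimetry, which gives $3\sqrt{2\delta n}$ with $\delta=\frac32\varepsilon\log_2(\frac1\varepsilon)\gg\varepsilon$. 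Second, even granting $\sum_{i\in I}(x_i^*-y_i^*)^2\le C\varepsilon n$, this is not a rare event: its expectation is $2|I|\le 2\varepsilon n$, so for any $C\ge 3$ it holds with probability $1-e^{-\Omega(\varepsilon n)}$; and even an optimistic per-set probability of $e^{-\Theta(\varepsilon n)}$ is swamped by the $2^{h(\varepsilon)n}$ union bound, because $h(\varepsilon)\ge\varepsilon\log_2(\frac1\varepsilon)\ge\varepsilon\log_2(9000)$ is always an order of magnitude larger than $\varepsilon$ --- your closing requirement $h(\varepsilon)n<\Theta(\varepsilon n)$ can never hold, and shrinking $\varepsilon$ only worsens the ratio. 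So the route through the $|I|$ frozen coordinates, at scale $\sqrt{\varepsilon n}$, is a dead end.

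The idea you are missing is that the paper places the two sides of the contradiction at different scales. The lower bound uses all $n$ coordinates, not just those in $I$: with probability $1-e^{-\Omega(n)}$ a constant fraction of coordinates have $|x_i^*|\ge 2$, hence $d(x^*,K\cap[-1,1]^n)\ge d(x^*,[-1,1]^n)\ge\frac15\sqrt n$. The upper bound is the one sketched above: if $|I^*|\le\varepsilon n$, then $d(x^*,K\cap[-1,1]^n)=d(x^*,K(I^*))\le 3\sqrt{2\delta n}$, and the union bound over the $e^{\delta n}$ sets $I$ succeeds because the isoperimetric failure probability per set is $e^{-2\delta n}$ --- both exponents are of order $\delta n$ and the factor $2$ wins, with no anti-concentration needed anywhere. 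The numerical condition is then simply $3\sqrt{2\delta}<\frac15$, i.e.\ $\delta$ is compared against an absolute constant (this is exactly where $\varepsilon\le\frac1{9000}$ enters), rather than $\varepsilon$ being compared against $h(\varepsilon)$.
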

\begin{proof}
First, we want to argue that $x^*$ has at least a distance of $\Omega(\sqrt{n})$
to the hypercube $[-1,1]^n$. A simple calculation shows that $\Pr_{x \sim N^n(0,1)}[|x_i| \geq 2] = 2\int_{2}^{\infty} \frac{1}{\sqrt{2\pi}}e^{-t^2/2}dt > \frac{1}{25}$. Then with probability
$1 - e^{-\Omega(n)}$ we have 
 $d(x^*,[-1,1]^n) \geq \sqrt{\frac{n}{25} \cdot (2-1)^2} = \frac{1}{5} \cdot \sqrt{n}$.

The crucial idea is that by the Gaussian isoperimetric inequality, $x^*$ will not 
be far from any  body that has a large enough Gaussian measure. 
The set $K \cap [-1,1]^n$ itself has only a tiny Gaussian measure, but we can instead consider 
the super-set 
$K(I^*) := K \cap \{ x \in \setR^n : |x_i| \leq 1 \; \forall i \in I^*\}$ where  $I^* := I^*(x^*) := \{ i \in [n] \mid y_i^* \in \{ \pm 1\} \}$
are the tight cube constraints for $y^*$.
We claim that $d(x^*,K \cap [-1,1]^n) = d(x^*,K(I^*))$ since the distance is 
already defined by the tight constraints for $y^*$! More formally, 
this claim follows from an  
application of Lemma~\ref{lem:ConvexOptimization} with $P := K(I^*)$, $Q := \{ x \in \setR^n \mid |x_i| \leq 1 \; \forall i \notin I^*\}$ and  $g(y) := \| x^* - y\|_2$ which is a strictly convex function.

Now, let us see what happens if $|I^*| \leq \varepsilon n$. 
We can apply the Lemma of {\v S}id{\'a}k and Khatri (Lemma~\ref{lem:SidakKhatri}) to lower bound the measure
of $K(I^*)$ as
\[
\gamma_n(K(I^*)) \geq \gamma_n(K) \cdot \prod_{i \in I^*} \gamma_n(\{ x \in \setR^n : |x_i| \leq 1\}) \geq \gamma_n(K) \cdot e^{-|I^*|/2} \geq e^{-\delta n} \cdot e^{-(\varepsilon/2)n} \geq e^{-2\delta n}
\]
using that strips of width 2 have measure at least $e^{-1/2}$ and that $\varepsilon \leq \delta$.
Now we know that the measure of  $K(I^*)$ is not too small and hence almost all Gaussian measure 
is close to it.  
Formally we obtain $\gamma_n(K(I^*)_{3\sqrt{2\delta n}}) \geq 1-e^{-2\delta n}$ by Lemma~\ref{lem:DistanceFromBody}. 
It seems we are almost done since we derived that with high probability, a random 
Gaussian vector has a distance of at most $3\sqrt{2\delta n}$ to $K(I^*)$ and
one can easily check that 
 $3\sqrt{2\delta n} < \frac{1}{5} \sqrt{n}$ 
for all $\varepsilon \leq \frac{1}{9000}$. But we need to be a bit careful since $I^*$
did depend on $x^*$. So, let us define $B := \bigcap_{|I| \leq \varepsilon n} (K(I)_{3\sqrt{2\delta n}})$. Observe that 
we have defined $\delta$ so that there are at most $e^{\delta n}$ many sets $I \subseteq [n]$ 
with $|I| \leq \varepsilon n$. Then by the 
union bound 
\[
\gamma_n(B) = 1-\gamma_n\Big( \bigcup_{|I| \leq \varepsilon n} (\setR^n \setminus K(I)_{3\sqrt{2\delta n}}) \Big) 
\geq 1-\sum_{|I| \leq \varepsilon n} \gamma_n(\setR^n \setminus K(I)_{3\sqrt{2\delta n}}) \geq 1 - e^{\delta n} \cdot e^{-2\delta n} \geq 1 - e^{-\delta n}.
\]
Now we can conclude that with probability $1-e^{-\Omega(n)}$, a random Gaussian
will have distance at least $\frac{1}{5}\sqrt{n}$ to the hypercube while at the same
time it has distance at most $3\sqrt{2\delta n} < \frac{1}{5} \sqrt{n}$ to all 
sets $K(I)$ with  $|I| \leq \varepsilon n$.
This shows that with high probability $|I^*| > \varepsilon n$.
\end{proof}
We get the constants as claimed in Theorem~\ref{thm:MainTheorem}
if we choose $\varepsilon = \frac{1}{9000}$ and observe that in this 
case $\delta \geq \frac{1}{500}$.


We should spend few words on the computational aspects of our algorithm.
We are assuming that for any point $x \notin K$, we can find a hyperplane separating $x$ 
from $K$ in polynomial time. First, $K$ must
be full-dimensional and even contain a ball of radius $r := e^{-\delta n}$ since otherwise $K$ would
be contained in a strip of width $2r$ which has a Gaussian measure of less than $r$. 
We can slightly modify the algorithm and output ``failure'' in case that $\|x^*\|_2 > R$ with $R := C \sqrt{n}$ for 
some some large enough constant $C$ --- this happens only with probability $e^{-\Omega(C^2)n}$. 
Now we can use the \emph{Ellipsoid method}~\cite{GLS-algorithm-Journal81} to find a point $\tilde{y}$ 
so that $\|\tilde{y} - y^*\|_2 \leq \eta$ where $\eta>0$ is some accuracy parameter. 
This can be done in time polynomial in $n$, $\log(\frac{1}{\eta})$, $\log(R)$ and $\log (\frac{1}{r})$. 
Now we can round that point $\tilde{y}$ to $\bar{y}$ with
\[
  \bar{y}_i := \begin{cases} 1 & \textrm{if } |\tilde{y}_i - 1| \leq \eta \\ -1 & \textrm{if } |\tilde{y}_i+1| \leq \eta \\ \tilde{y}_{i} & \textrm{otherwise} \end{cases}
\]
Then for $\eta < 1$ one has $y_i^* \in \{ -1,1\} \Rightarrow \bar{y}_i=y^*_i$. 
In particular $\bar{y} \in [-1,1]^n$ and the number of integral entries in $\bar{y}$ is at least $\varepsilon n$ as required.
Let $\|x\|_K := \min\{ \lambda \geq 0: x \in \lambda K\}$ denote the \emph{Minkowski norm} of $x$.
Then, $\bar{y}$ is almost in $K$ as 
\[
\|\bar{y}\|_K \leq \|y^*\|_K + \|\tilde{y} - y^*\|_K + \|\bar{y} - \tilde{y}\|_K \leq 1 + \frac{1}{r}\|\tilde{y}-y^*\|_2 + \frac{1}{r} \|\bar{y}-\tilde{y}\|_2 \leq 1+(n+1) \cdot \frac{\eta}{r}.
\]
Here we use that $\|z\|_K \leq \frac{\|z\|_2}{r}$ for all vectors $z \in \setR^n$ as $K$ contains a ball of radius $r$.
In order to actually obtain a point in $K$ one can apply the above algorithm to the slightly scaled 
body $K' := (1+(n+1) \cdot \frac{\eta}{r})^{-1}K$  and choose $\eta$ 
small enough so that $\gamma_n(K') \geq e^{-1.0001\delta n}$. The calculations in the proof of
Theorem~\ref{thm:MainTechnicalTheorem} have enough slack to account for the slightly reduced measure.

\section{Extension to intersection with subspaces}

As already mentioned, our algorithm includes the result of Lovett and Meka
in the following sense: Suppose our convex set is a polytope of the form
 $K = \{ x \in \setR^n : \left|\left< v_i,x \right>\right| \leq \lambda_i \; \forall i \in [m]\}$ where all the $v_i$'s are unit vectors
and $\lambda_i \geq 1$. In this case, the strip $S = \{ x \in \setR^n : \left|\left<v_i,x\right>\right| \leq \lambda_i\}$ of length $2\lambda_i$ has 
measure $\gamma_n(S) = \Phi(\lambda_i) \geq 1-e^{-\lambda_i^2/2} \geq \exp(-2e^{-\lambda_i^2/2})$ using that $\lambda_i \geq 1$. 
By the Lemma of {\v S}id{\'a}k-Khatri this means that
\[
  \gamma_n(K) \geq \prod_{i=1}^m \exp(-2e^{-\lambda_i^2/2}) = \exp\Big(-2\sum_{i=1}^m e^{-\lambda_i^2/2}\Big) \stackrel{!}{\geq} e^{-n/500}
\]
as long as $\sum_{i=1}^m e^{-\lambda_i^2/2} \leq \frac{n}{1000}$, exactly as in Lovett-Meka (apart from  different
constants). 
Please note that this line of arguments appeared already in 
the paper of Giannopoulos~\cite{Giannopoulos1997}. In the following we want to 
argue how $\Omega(n)$ many constraints with $\lambda_i=0$ can be incorporated 
in the analysis.

For a subspace $H$  we denote $N_H(0,1)$ as the 
$\dim(H)$-dimensional Gaussian distribution restricted to the subspace $H$
and we denote $\gamma_H$ as the corresponding measure. For example one can generate
a random $z \sim N_H(0,1)$ by selecting any orthonormal basis $u_1,\ldots,u_{\dim(H)}$ of $H$
and letting $z = \sum_{i=1}^{\dim(H)} g_iu_i$ where $g_1,\ldots,g_{\dim(H)} \sim N(0,1)$
are independent 1-dim. Gaussians. 
Note that $\gamma_H(H)=1$ and $\gamma_H(\setR^n \backslash H) = 0$. We want to remind the reader that for
any symmetric convex set $K$ and any subspace $H$, by log-concavity of $\gamma_n$ one has $\gamma_H(K) \geq \gamma_n(K)$.
More details can be found e.g. in Giannopoulos~\cite{Giannopoulos1997}.

We want to argue that the following variation of 
our main claim still holds:
\begin{theorem} \label{thm:FindYinKinsubspace}
Fix $0 < \varepsilon \leq \frac{1}{60000}$ and $\delta := \frac{3}{2} \varepsilon \log_2(\frac{1}{\varepsilon})$. 
Let $K \subseteq \setR^n$ be a symmetric, convex body 
with $K \subseteq H$ and $\gamma_H(K) \geq e^{-\delta n}$
where $H = \{ x \in \setR^n \mid \left<v_i,x\right>=0 \; \forall i\in [m] \}$
is a subspace defined by $m \leq 2\delta n$ equations.
Choose a random Gaussian $x^* \sim N^n(0,1)$ and let $y^*$ be the point in $K \cap [-1,1]^n$ that
minimizes $\|x^* - y^*\|_2$.
Then with probability $1-e^{-\Omega(n)}$, $y^*$ has at least $\varepsilon n$
many coordinates $i$ with $y_i^* \in \{ -1,1\}$.
\end{theorem}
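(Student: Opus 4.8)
The plan is to mimic the proof of Theorem~\ref{thm:MainTechnicalTheorem}, with the only structural change being that every occurrence of the ambient Gaussian measure $\gamma_n$ gets replaced by the measure $\gamma_H$ of the subspace $H$, and that the projection of $x^*$ onto $H$ plays the role of the random Gaussian. First I would write $x^* = x_H^* + x_{H^\perp}^*$ where $x_H^*$ is the orthogonal projection of $x^*$ onto $H$; since $x^* \sim N^n(0,1)$, the vector $x_H^*$ is distributed as $N_H(0,1)$ and is independent of $x_{H^\perp}^*$. Because $K \subseteq H$, for every $y \in K$ we have $\|x^* - y\|_2^2 = \|x_{H^\perp}^*\|_2^2 + \|x_H^* - y\|_2^2$, so the minimizer $y^*$ of $\|x^* - y\|_2$ over $K \cap [-1,1]^n$ is exactly the minimizer of $\|x_H^* - y\|_2$ over the same set, and the relevant distances are all measured inside $H$. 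This reduces the problem to the same statement with $x_H^* \sim N_H(0,1)$ in place of the full Gaussian.

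Next I would redo the three ingredients. (i) \emph{Distance to the cube.} I still need $d(x_H^*, [-1,1]^n) \geq \frac{1}{5}\sqrt{n}$ with probability $1 - e^{-\Omega(n)}$; this is where the slightly worse constant comes from. One way: $d(x_H^*,[-1,1]^n) \geq d(x^*,[-1,1]^n) - \|x_{H^\perp}^*\|_2$, and $\|x_{H^\perp}^*\|_2^2$ has expectation $m \leq 2\delta n$, so with high probability $\|x_{H^\perp}^*\|_2 \leq \sqrt{3\delta n}$, say; combined with $d(x^*,[-1,1]^n) \geq \frac{1}{5}\sqrt{n}$ as before, this still leaves $d(x_H^*,[-1,1]^n)$ bounded below by a constant times $\sqrt{n}$ once $\varepsilon$ (hence $\delta$) is small enough. (ii) \emph{Measure of $K(I^*)$ inside $H$.} Here $K(I^*) := K \cap \{|x_i| \le 1 : i \in I^*\}$, and I apply {\v S}id{\'a}k--Khatri \emph{within the subspace} $H$: $\gamma_H(K(I^*)) \geq \gamma_H(K) \cdot \prod_{i \in I^*} \gamma_H(\{|x_i|\le 1\})$. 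The one subtlety is that $\gamma_H(\{|x_i| \le 1\})$ is the Gaussian measure of a strip in $H$; since projecting a unit vector $e_i$ onto $H$ only shrinks it, the strip $\{x \in H : |x_i| \le 1\}$ is at least as wide as a width-$2$ strip in $H$, so its $\gamma_H$-measure is still $\geq e^{-1/2}$. Thus $\gamma_H(K(I^*)) \geq e^{-\delta n} e^{-(\varepsilon/2)n} \geq e^{-2\delta n}$ exactly as before. (iii) \emph{Isoperimetry in $H$.} Lemma~\ref{lem:DistanceFromBody} applies verbatim in the $\dim(H)$-dimensional Gaussian space $H$ (the Gaussian isoperimetric inequality is dimension-free), giving $\gamma_H(K(I^*)_{3\sqrt{2\delta \dim H}}) \geq 1 - e^{-2\delta n}$, and $\dim H \le n$ so the radius is at most $3\sqrt{2\delta n} < \frac{1}{5}\sqrt{n}$.

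Finally I would take the union bound over all $I \subseteq [n]$ with $|I| \le \varepsilon n$: there are at most $2^{h(\varepsilon)n} \le e^{\delta n}$ of them, so $\gamma_H\big(\bigcap_{|I|\le\varepsilon n} K(I)_{3\sqrt{2\delta n}}\big) \geq 1 - e^{\delta n} e^{-2\delta n} \geq 1 - e^{-\delta n}$, with all measures taken under $\gamma_H$. Since $x_H^* \sim N_H(0,1)$, with probability $1 - e^{-\Omega(n)}$ the point $x_H^*$ simultaneously has distance $\geq \frac{1}{5}\sqrt{n}$ from the cube and distance $\leq 3\sqrt{2\delta n}$ from every $K(I)$ with $|I| \le \varepsilon n$; the same convexity argument via Lemma~\ref{lem:ConvexOptimization} (applied inside $H$, with $P = K(I^*)$ and $Q$ the slab of non-tight cube constraints) shows $d(x_H^*, K \cap [-1,1]^n) = d(x_H^*, K(I^*))$, forcing $|I^*| > \varepsilon n$. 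I expect the main obstacle to be purely bookkeeping: making sure every measure-theoretic statement is interpreted in the right space ($\gamma_H$ versus $\gamma_n$), and tracking the constant carefully in step (i) — the extra error term $\|x_{H^\perp}^*\|_2$ of order $\sqrt{\delta n}$ is what forces $\varepsilon \le \frac{1}{60000}$ rather than $\frac{1}{9000}$, and one should double-check the arithmetic that $\sqrt{3\delta n} + 3\sqrt{2\delta n} < \frac{1}{5}\sqrt{n}$ for all $\varepsilon \le \frac{1}{60000}$. Everything else transfers line by line, using $\gamma_H(K) \geq \gamma_n(K)$ only if one wanted to phrase the hypothesis in terms of $\gamma_n$, which here is unnecessary since the hypothesis is already stated for $\gamma_H$.
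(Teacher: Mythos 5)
Your proposal is correct and follows essentially the same route as the paper's proof: there, too, one projects $x^*$ onto $H$, uses that the projection is distributed as $N_H(0,1)$, applies {\v S}id{\'a}k--Khatri and Lemma~\ref{lem:DistanceFromBody} inside $H$ to get distance at most $3\sqrt{2\delta n}$ to each $K(I)$, accounts for the orthogonal component, and finishes with the same union bound and the convexity argument of Lemma~\ref{lem:ConvexOptimization}. The only cosmetic differences are in bookkeeping: the paper bounds $\|x^* - z^*\|_2 \leq 4\sqrt{2\delta n}$ via {\v S}id{\'a}k plus isoperimetry (rather than your $\chi^2$-type tail bound giving $\sqrt{3\delta n}$) and adds it to the $K(I)$-distance by the triangle inequality, leading to the comparison $7\sqrt{2\delta n} < \frac{1}{5}\sqrt{n}$, whereas you subtract it from the cube distance after the Pythagorean reduction to $H$ --- both versions close with the same arithmetic margin for $\varepsilon \leq \frac{1}{60000}$.
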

\begin{proof}
Reinspecting the proof of Theorem~\ref{thm:MainTechnicalTheorem}, we see that it suffices to argue that 
most of the measure is still close to the sets $K(I)$. Formally, 
we will argue that for all $|I| \leq \varepsilon n$ one has
$\gamma_n(K(I)_{7\sqrt{2\delta n}}) \geq 1 - 2e^{-2\delta n}$. Then $7\sqrt{2\delta n} < \frac{1}{5}\sqrt{n}$ for $\varepsilon \leq \frac{1}{60000}$ and the claim follows. 

Hence, take a random point  $x^* \sim N^n(0,1)$ and let $z^* \in H$ be the projection of 
$x^*$ onto $H$ (that means $z^*$ is the point in $H$ closest to $x^*$). 
We may assume w.l.o.g. that $v_1,\ldots,v_m$ are orthonormal. 
First, at least some part of the measure is close to $H$, since
$\gamma_n(H_{\sqrt{2\delta n}}) \geq \gamma_n(\{ x \in \setR^n : \left|\left<v_i,x\right>\right| \leq 1 \; \forall i \in [m] \}) \geq e^{-2\delta n}$ by Lemma~\ref{lem:SidakKhatri}.
By Lemma~\ref{lem:DistanceFromBody} this implies that $\gamma_n(H_{4\sqrt{2\delta n}}) = \gamma_n( (H_{\sqrt{2\delta n}})_{3\sqrt{2\delta n}}) \geq 1 - e^{-2\delta n}$ 
and hence with the latter probability  $\|x^* - z^*\|_2 \leq 4\sqrt{2\delta n}$. 

In a second step, observe that we need to argue that $z^*$ is close to $K(I)$.
We know that $\gamma_H(K(I)) \geq \gamma_H(K) \cdot e^{-(\varepsilon/2) n} \geq e^{-2\delta n}$
as before. 
Since $z^*$ is an orthogonal projection of a Gaussian, we know that $z^* \sim N_H(0,1)$
and we obtain that $d(z^*,K(I)) \leq 3\sqrt{2\delta n}$ with probability $1-e^{-2\delta n}$. 
The claim then follows. 
\end{proof}

For being able to use the algorithm iteratively to find a full coloring, it is
important that we admit centers that are not the origin. But this is very straightforward to obtain.
In the following, for $c \in \setR^n$ and $K \subseteq \setR^n$ we define $c + K = \{ c+x : x \in K\}$
as the \emph{translate} of $K$ by $c$.
\begin{lemma} \label{lem:FindingPointInTranslatedSet}
Let $\varepsilon \leq \frac{1}{60000}$ and $\delta := \frac{3}{2}\varepsilon \log_2(\frac{1}{\varepsilon})$.
Given a subspace $H \subseteq \setR^n$ of dimension at least $(1-\delta)n$, 
a symmetric convex set $K \subseteq H$ with $\gamma_H(K) \geq e^{-\delta n}$ 
and a point $c \in {\left]-1,1\right[}^n$. There exists a polynomial time algorithm 
to find a point $y \in (c + K) \cap [-1,1]^n$ so that at least 
$\frac{\varepsilon}{2} n$ many indices $i$ have $y_i \in \{ -1,1\}$.
\end{lemma}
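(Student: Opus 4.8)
The idea is to reduce the translated problem to the centered problem already solved in Theorem~\ref{thm:FindYinKinsubspace}. The point $c \in \left]-1,1\right[^n$ is strictly inside the cube, so each coordinate has slack $1-|c_i| > 0$; the only subtlety is that this slack can be tiny, so we cannot simply rescale uniformly. Instead, the plan is to split the coordinates into a ``large slack'' set and a ``small slack'' set, handle the large-slack coordinates by a direct reduction, and freeze the small-slack coordinates, absorbing them into the subspace.

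\textbf{Step 1 (partition coordinates by slack).} Let $J := \{ i \in [n] : |c_i| \leq 1 - \beta \}$ for a suitable constant $\beta$ (say $\beta = \frac{1}{2}$, to be tuned). If $|J| \geq (1 - \frac{\delta}{2})n$, go to Step 2; otherwise the complement $[n]\setminus J$ is already large and we are in the degenerate regime handled in Step 3. Actually, since $c$ is fixed and arbitrary in $\left]-1,1\right[^n$, we should not assume $|J|$ is large; the clean route is: always freeze the coordinates outside $J$.

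\textbf{Step 2 (reduce to the centered case on a subspace).} Restrict to the coordinates in $J$ and consider the affine slice obtained by fixing $x_i = c_i$ for $i \notin J$. Set $\tilde{H} := H \cap \{ x : x_i = 0 \;\forall i \notin J\}$, which cuts the dimension by at most $|[n]\setminus J|$. Now translate: the set $(c+K)\cap\{x_i = c_i, i\notin J\}$, recentered at $c$, becomes a symmetric convex body $K' \subseteq \tilde{H}$ (symmetric because $K$ is symmetric and we recentered at $c$). Within the $J$-coordinates, the cube constraint $x_i \in [-1,1]$ becomes $x'_i \in [-1-c_i, 1-c_i] \supseteq [-\beta,\beta]$, a symmetric cube of half-width $\beta$. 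Rescale the $J$-coordinates by $1/\beta$ so that the cube constraint becomes exactly $[-1,1]^{J}$; this scales $K'$ to $K'' := \frac{1}{\beta}K'$, still symmetric and convex, and scales the Gaussian measure only by a constant-in-the-exponent factor: $\gamma_{\tilde H}(K'') \geq \gamma_{\tilde H}(K')\cdot(\text{const})^{\dim}$, which one checks stays above $e^{-\delta' \dim}$ for a slightly larger constant $\delta'$, still within the allowed range because we took $\varepsilon \leq \frac{1}{60000}$ (there is a factor-of-two slack relative to the $\frac{1}{9000}$ of Theorem~\ref{thm:MainTechnicalTheorem} precisely to absorb this). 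Then apply Theorem~\ref{thm:FindYinKinsubspace} in $\setR^{J}$ with subspace (the rescaled image of) $\tilde H$: it produces a point with at least $\varepsilon|J|$ coordinates in $\{\pm 1\}$. Undoing the rescaling and translation, and setting $y_i = c_i$ for $i \notin J$, gives $y \in (c+K)\cap[-1,1]^n$ with at least $\varepsilon|J|$ coordinates in $\{\pm1\}$.

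\textbf{Step 3 (bookkeeping on the number of frozen coordinates).} The one thing that can go wrong is $|J|$ being too small, i.e. too many coordinates of $c$ being within $\beta$ of $\pm 1$. But observe that the whole point of the lemma is the \emph{iterative} use: in the intended application $c$ is itself a previously found partial coloring, so its ``almost-$\pm1$'' coordinates are exactly those we want to round off. Concretely, one should first round every coordinate with $|c_i| \geq 1-\beta$ to $\mathrm{sign}(c_i)$; this changes the center by at most $\beta$ per coordinate in $\ell_\infty$, hence keeps the new center inside a slightly larger cube, and one reruns the argument with $J = [n]$ after a harmless rescaling of the cube by $(1+\beta)$. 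Either way, the count of $\{\pm1\}$ coordinates in the output is at least $\varepsilon|J| \geq \frac{\varepsilon}{2}n$ once $|J| \geq \frac{n}{2}$, which is where the factor $\frac{\varepsilon}{2}$ in the statement (versus $\varepsilon$ in Theorem~\ref{thm:FindYinKinsubspace}) comes from. The dimension loss $\dim(\tilde H) \geq \dim(H) - |[n]\setminus J| \geq (1-\delta)n - \delta n = (1-2\delta)n$ still satisfies the $m \leq 2\delta n$-type hypothesis of Theorem~\ref{thm:FindYinKinsubspace} after renaming constants.

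\textbf{Main obstacle.} The delicate point is \emph{not} the geometry — symmetry is preserved by recentering at $c$, and \v{S}id\'ak--Khatri plus the isoperimetric lemma do all the heavy lifting exactly as before — but the uniform control of how much the Gaussian measure degrades under (i) rescaling the large-slack coordinates by $1/\beta$ and (ii) passing to the subspace $\tilde H$. One must verify that the combined loss keeps $\gamma_{\tilde H}(K'') \geq e^{-\delta'\dim}$ with $\delta'$ in the admissible range; this is exactly why the lemma's hypothesis uses $\varepsilon \leq \frac{1}{60000}$ rather than the $\frac{1}{9000}$ of the centered theorem, and checking that the arithmetic closes with these constants is the real content of the proof. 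The computational claim is immediate: all reductions (partitioning, translating, rescaling) are explicit, and Theorem~\ref{thm:FindYinKinsubspace}'s algorithm runs in polynomial time given a separation oracle, which translates/rescales transparently.
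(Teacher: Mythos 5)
There is a genuine gap, and it is fatal to the stated conclusion. In your Step 2 you replace the recentered cube constraint $x_i' \in [-1-c_i,\,1-c_i]$ by the \emph{symmetric} inner cube $[-\beta,\beta]$ and rescale uniformly by $1/\beta$. Theorem~\ref{thm:FindYinKinsubspace} then gives coordinates equal to $\pm 1$ in the rescaled space, i.e.\ equal to $\pm\beta$ after undoing the rescaling, i.e.\ equal to $c_i \pm \beta$ in the original coordinates. For $|c_i| < 1-\beta$ these values lie strictly inside $(-1,1)$, so the output has essentially \emph{no} coordinates in $\{-1,1\}$ and the lemma's conclusion is not met. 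The same problem dooms Step 3: rounding the small-slack coordinates of $c$ to $\mathrm{sign}(c_i)$ changes the translate $(c+K)$, and ``rescaling the cube by $(1+\beta)$'' produces a point in $[-(1+\beta),1+\beta]^n$ rather than in $(c+K)\cap[-1,1]^n$; the lemma requires exact membership and exact $\pm 1$ entries, so neither patch is admissible.

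The paper avoids all of this with a single anisotropic map and no case distinction on the slack. After assuming w.l.o.g.\ $0 \le c_i < 1$, it applies the coordinate-wise stretching $F$ with $F((1-c_i)e_i) = e_i$, so that $F(\{x : |x_i| \le 1-c_i\}) = [-1,1]^n$ \emph{exactly}. Stretching by factors $\frac{1}{1-c_i} \ge 1$ never decreases Gaussian measure (Lemma~\ref{lem:KscaledInOneDirection} / Cor.~\ref{cor:KscaledMultiDirections}), even when $1-c_i$ is tiny, so no ``small-slack'' regime exists and no measure bookkeeping is needed --- this is also why the $\frac{1}{60000}$ hypothesis has nothing to do with translation losses; it is simply inherited from Theorem~\ref{thm:FindYinKinsubspace}. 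Theorem~\ref{thm:FindYinKinsubspace} applied to $F(K) \subseteq F(H)$ gives $y$ with $\ge \varepsilon n$ entries in $\{\pm1\}$; after possibly replacing $y$ by $-y$ one has $\ge \frac{\varepsilon}{2}n$ entries equal to $+1$, and $\tilde y := c + F^{-1}(y)$ satisfies $\tilde y_i = c_i + (1-c_i)y_i$, which equals $1$ exactly where $y_i = 1$ and lies in $[-1,1]$ everywhere. Note in particular that the $\frac{\varepsilon}{2}$ in the statement comes from keeping only the $+1$ coordinates (the asymmetric slack toward $-1$ is simply discarded), not from any bound like $|J| \ge \frac{n}{2}$ as your plan assumes. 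If you want to salvage your decomposition idea, you must use the full per-coordinate slack $1-|c_i|$ with a sign convention (which is precisely the paper's map $F$), not a uniform $\beta$.
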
 

\begin{proof}
For symmetry reasons we may assume that $0 \leq c_i<1$. Define a linear map $F : \setR^n \to \setR^n$
with $F((1-c_i) \cdot e_i) = e_i$, where $e_i$ is the $i$th unit vector. In other words, $F$
\emph{stretches} the space along the $i$th coordinate by a factor of $\frac{1}{1-c_i} \geq 1$. 
Note that in particular $F(\{x \in \setR^n : |x_i| \leq 1-c_i\}) = [-1,1]^n$.
Stretching can only increase the Gaussian measure, that means $\gamma_{F(H)}(F(K)) \geq \gamma_H(K)$ --- we will see formal arguments later in Cor.~\ref{cor:KscaledMultiDirections}. Moreover, $F(K)$ is still symmetric
and convex.
We can use Theorem~\ref{thm:FindYinKinsubspace} to find a vector $y \in F(K) \cap [-1,1]^n$ so that $|\{ i : y_i \in \{ -1,1\}\}| \geq \varepsilon n$. Again, after potentially replacing $y$ with $-y$ we may assume that $|\{ i : y_i = 1\}| \geq \frac{\varepsilon}{2} n$. We claim that the point $\tilde{y} := c+F^{-1}(y)$
will satisfy the claim. Since $y \in F(K)$, we have $F^{-1}(y) \in K$. 
Next, note that $\tilde{y}_i = c_i + (1-c_i) \cdot y_i$. Hence $\tilde{y} \in [-1,1]^n$ and for each $i$ with $y_i=1$ one has $\tilde{y}_i = 1$. This shows the claim.
\end{proof}
For the sake of completeness, we want to mention the slighly easier form of this lemma that
does not involve a subspace and has somewhat better constants: 
\begin{corollary} \label{cor:FindingPointInTranslatedSetSimple}
Let $\varepsilon \leq \frac{1}{9000}$ and $\delta := \frac{3}{2}\varepsilon \log_2(\frac{1}{\varepsilon})$.
Given a symmetric convex set $K \subseteq \setR^n$ with $\gamma_n(K) \geq e^{-\delta n}$ 
and a point $c \in {\left]-1,1\right[}^n$, there exists a polynomial time algorithm 
to find a point $y \in (c + K) \cap [-1,1]^n$ so that at least 
$\frac{\varepsilon}{2} n$ many indices $i$ have $y_i \in \{ -1,1\}$.
\end{corollary}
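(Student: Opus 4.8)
The plan is to rerun, essentially verbatim, the proof of Lemma~\ref{lem:FindingPointInTranslatedSet}, but with the subspace removed (i.e.\ $H=\setR^n$) and with Theorem~\ref{thm:MainTechnicalTheorem} fed into the argument in place of Theorem~\ref{thm:FindYinKinsubspace}. This is precisely what buys the wider range $\varepsilon\le\frac{1}{9000}$: the subspace-free Theorem~\ref{thm:MainTechnicalTheorem} tolerates this larger $\varepsilon$, whereas the subspace version needs $\varepsilon\le\frac{1}{60000}$. As a first step I would reduce to the case $0\le c_i<1$ for every $i$: flipping the sign of the $i$th coordinate replaces $K$ by a reflected (still symmetric, convex, and equal-measure) copy, replaces $c_i$ by $-c_i$, and leaves both $[-1,1]^n$ and the property ``has a coordinate in $\{-1,1\}$'' invariant, so the reflections can be undone at the end.

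Next I would introduce the explicit diagonal map $F:\setR^n\to\setR^n$ with $Fe_i=\frac{1}{1-c_i}e_i$. Since $|c_i|<1$ this is well defined with all stretch factors $\frac{1}{1-c_i}\ge 1$, and it sends the box $\{x\in\setR^n:|x_i|\le 1-c_i\ \forall i\}$ onto $[-1,1]^n$. The body $F(K)$ is again symmetric and convex, and the key point is that $\gamma_n(F(K))\ge\gamma_n(K)\ge e^{-\delta n}$: stretching the coordinate axes of a symmetric convex body outwards never decreases its Gaussian measure. With this measure bound in hand I would apply Theorem~\ref{thm:MainTechnicalTheorem} to $F(K)$ to obtain, with probability $1-e^{-\Omega(n)}$, a point $y\in F(K)\cap[-1,1]^n$ having $y_i\in\{-1,1\}$ for at least $\varepsilon n$ coordinates. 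The polynomial-time claim follows because a separation oracle for $K$ yields one for $F(K)$ (compose with the explicit affine map $F^{-1}$), and the Ellipsoid/rounding discussion following Theorem~\ref{thm:MainTechnicalTheorem} handles full-dimensionality, the boundedness of $x^*$, and the rounding of the near-optimal point.

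Finally I would transport $y$ back. After possibly replacing $y$ by $-y\in F(K)\cap[-1,1]^n$ (legal since $F(K)$ is symmetric), assume $y_i=1$ for at least $\frac{\varepsilon}{2}n$ indices, and set $\tilde y:=c+F^{-1}(y)$. Then $F^{-1}(y)\in K$ because $y\in F(K)$, so $\tilde y\in c+K$; coordinatewise $\tilde y_i=c_i+(1-c_i)y_i$, which lies in $[2c_i-1,1]\subseteq[-1,1]$ because $y_i\in[-1,1]$ and $0\le c_i<1$, while $y_i=1$ forces $\tilde y_i=1$. Hence $\tilde y\in(c+K)\cap[-1,1]^n$ has at least $\frac{\varepsilon}{2}n$ coordinates equal to $1$, and undoing the coordinate reflections gives the statement.

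The only ingredient here that is not pure bookkeeping is the monotonicity of Gaussian measure under outward axis stretches of a symmetric convex body; this is the same fact already invoked inside the proof of Lemma~\ref{lem:FindingPointInTranslatedSet} (established there in full generality as Cor.~\ref{cor:KscaledMultiDirections}), and it can be verified one coordinate at a time by a slicing argument, pairing the slice over $x'$ with the slice over $-x'$ and using that $K$ symmetric forces the latter to be the reflection of the former. I expect this monotonicity statement to be the only place where any real thought is needed; everything else is routine, which is exactly why the statement is phrased as a corollary.
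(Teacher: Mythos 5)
Your proposal is correct and is essentially the paper's own argument: the paper proves this corollary by rerunning the proof of Lemma~\ref{lem:FindingPointInTranslatedSet} with Theorem~\ref{thm:MainTechnicalTheorem} substituted for Theorem~\ref{thm:FindYinKinsubspace}, which is exactly what you do (including the reflection to $0\le c_i<1$, the stretch map $F$, the measure monotonicity via Cor.~\ref{cor:KscaledMultiDirections}, and the transport $\tilde y = c+F^{-1}(y)$).
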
 
\begin{proof}
Use the same proof as in Lemma~\ref{lem:FindingPointInTranslatedSet}, by apply directly 
Theorem~\ref{thm:MainTechnicalTheorem}.
\end{proof}

We want to briefly outline how one can iteratively apply Lemma~\ref{lem:FindingPointInTranslatedSet}
in order to find a full coloring
(similar arguments can be found in~\cite{Giannopoulos1997}). 
Intuitively, whenever we induce on a subset of coordinates, 
the convex set needs to be still large enough. 
For a subset $J \subseteq [n]$ of indices, we call $U = \{ x \in \setR^n : x_i = 0 \; \forall i \in J\}$
an \emph{axis-parallel subspace}.
\begin{lemma} \label{lem:FindingColoringWithLogNBlowup}
Suppose that $K \subseteq \setR^n$ is a symmetric convex body so that for all
axis-parallel subspaces $U \subseteq \setR^n$ 
one has that $\gamma_U(K) \geq e^{-\dim(U)/500}$. Then there is a polynomial time algorithm
to compute a $y \in \{ \pm 1\}^n \cap O(\log n) \cdot K$.
\end{lemma}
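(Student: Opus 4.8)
The plan is to apply Lemma~\ref{lem:FindingPointInTranslatedSet} repeatedly, peeling off a constant fraction of the still-fractional coordinates at each round. We maintain a partial coloring together with the axis-parallel subspace spanned by the coordinates not yet fixed to $\pm 1$, and we rescale the convex body so that the fractional coordinates stay inside $[-1,1]$. After $O(\log n)$ rounds all coordinates are in $\pm 1$, and we control how much $K$ has been inflated in the process.

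More precisely, I would argue inductively. Start with $y^{(0)} = 0$, the full index set $F_0 = [n]$, and the body $K_0 = K$. At stage $t$, suppose the coordinates outside $F_t$ are already $\pm 1$, that $y^{(t)} \in [-1,1]^n$, and that $y^{(t)} \in \lambda_t \cdot K$ for some scalar $\lambda_t$ we track. Let $U_t$ be the axis-parallel subspace on the coordinates in $F_t$; by hypothesis $\gamma_{U_t}(K) \ge e^{-\dim(U_t)/500}$, and this persists under passing to the intersection with $U_t$. If $|F_t| = n_t$, then on the $n_t$-dimensional space $U_t$ we invoke Lemma~\ref{lem:FindingPointInTranslatedSet} with the center being the restriction of $y^{(t)}$ to $F_t$ (which lies in the open cube because those coordinates are still fractional), and with a suitably rescaled copy of $K$ playing the role of the symmetric body: we need the rescaled body to still have measure at least $e^{-\delta n_t}$ with $\delta = \frac{3}{2}\varepsilon\log_2(1/\varepsilon)$ and $\varepsilon = \frac{1}{60000}$, which is guaranteed since $\frac{1}{500} \le \delta$ leaves slack for the constant-factor inflation incurred when we stretch the cube constraint on the fractional coordinates back to $[-1,1]^n$. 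The lemma hands back a point with at least $\frac{\varepsilon}{2} n_t$ coordinates newly set to $\pm 1$, so $n_{t+1} \le (1 - \frac{\varepsilon}{2}) n_t$; hence the process terminates after $T = O(\log n)$ rounds with all coordinates in $\pm 1$.

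The final point lies in $c_T + F_T^{-1}(\cdots)$ composed over all the translations and stretchings; the key bookkeeping is that each round multiplies the Minkowski-norm bound of the current partial coloring with respect to $K$ by a factor bounded by an absolute constant (the stretch factors $\frac{1}{1-c_i}$ aggregated over one round, plus the additive contribution of the new $\pm 1$ entries), and adds the displacement of that round. Summing a geometric-type series over the $O(\log n)$ rounds gives $\|y\|_K = O(\log n)$, i.e.\ $y \in O(\log n) \cdot K$ with $y \in \{\pm 1\}^n$. Each round is polynomial time by Lemma~\ref{lem:FindingPointInTranslatedSet}, and there are $O(\log n)$ of them, so the whole algorithm is polynomial.

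The main obstacle is the accounting in the third paragraph: one must verify that the successive rescalings do not blow up the measure below the $e^{-\delta n_t}$ threshold required to reapply the lemma at the next scale — this is exactly why the hypothesis is stated as $\gamma_U(K) \ge e^{-\dim(U)/500}$ for \emph{every} axis-parallel subspace rather than just for $\setR^n$, since each round effectively restarts the argument in a lower-dimensional axis-parallel slice, and the gap between $\frac{1}{500}$ and $\delta$ must absorb both the Šidák–Khatri loss from fixing the newly colored coordinates and the stretching of the cube in the translated-center reduction. A secondary point to check is that the coordinates we feed as the center $c$ at each round are genuinely in the open cube $(-1,1)$, which holds because we only ever keep a coordinate "fractional" if it was not rounded to $\pm 1$; but one should be slightly careful that a coordinate could drift exactly to $\pm 1$ without being selected, in which case it can simply be moved into the fixed set for free.
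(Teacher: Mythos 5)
Your overall strategy is the paper's: iterate the partial-coloring lemma on the coordinates not yet fixed, using the hypothesis on every axis-parallel subspace to restart in lower dimension, terminate after $O(\log n)$ rounds, and bound $\|y\|_K$ by the number of rounds. However, your bookkeeping in the third paragraph is not correct as stated. If each round really multiplied the Minkowski-norm bound by an absolute constant $C>1$ and then added the new displacement, then after $T=\Theta(\log n)$ rounds you would get $C^{T}=n^{\Theta(1)}$, not $O(\log n)$; a ``geometric-type series'' with ratio above $1$ over $\log n$ terms is polynomial. The point you are missing is that no multiplicative factor occurs at all: Lemma~\ref{lem:FindingPointInTranslatedSet} (and Cor.~\ref{cor:FindingPointInTranslatedSetSimple}) already returns a point in $(c+K)\cap[-1,1]^n$ --- the stretching map $F$ is internal to its proof and is undone by $F^{-1}$ before the point is output. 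Hence each round's displacement satisfies $\|y^{(t)}-y^{(t-1)}\|_K\le 1$ (it even lies in $K\cap U_t$), and the triangle inequality immediately gives $\|y\|_K\le T=O(\log n)$. Similarly, your worry that the ``gap between $\tfrac{1}{500}$ and $\delta$'' must absorb the stretch factors and the \v{S}id\'ak--Khatri loss is misplaced: stretching only \emph{increases} Gaussian measure (Cor.~\ref{cor:KscaledMultiDirections}), and the \v{S}id\'ak--Khatri loss is internal to the proof of the main theorem; all the iteration needs from the hypothesis is $\gamma_{U_t}(K)\ge e^{-\delta\dim(U_t)}$ in each round.

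There is also a concrete numerical error: you invoke Lemma~\ref{lem:FindingPointInTranslatedSet} with $\varepsilon=\tfrac{1}{60000}$ and assert $\tfrac{1}{500}\le\delta$, but for that $\varepsilon$ one has $\delta=\tfrac{3}{2}\varepsilon\log_2(1/\varepsilon)\approx 4\cdot 10^{-4}<\tfrac{1}{500}$, so the stated hypothesis $\gamma_U(K)\ge e^{-\dim(U)/500}$ does \emph{not} meet that lemma's measure requirement $e^{-\delta\dim(U)}$. Since you work intrinsically in the $n_t$-dimensional space $U_t$ anyway (no equality constraints are needed), the right tool is Cor.~\ref{cor:FindingPointInTranslatedSetSimple} with $\varepsilon=\tfrac{1}{9000}$, for which $\delta\ge\tfrac{1}{500}$ and the hypothesis suffices; this is exactly what the paper does. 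With these two repairs --- additive norm accounting and the corollary in place of the subspace lemma --- your argument coincides with the paper's proof.
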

\begin{proof}
For iterations $t=1,\ldots,T$ we will compute a sequence of points $y^{(t)} \in (y^{(t-1)} + K) \cap [-1,1]^n$
that ends with the desired vector  $y := y^{(T)} \in \{ -1,1\}^n$.
We start with $y^{(0)} := \bm{0}$. Then in iteration $t \geq 1$, we define the subspace 
 $U := \{ x \in \setR^n : x_i = 0\;\mathrm{ for }\;  y^{(t-1)}_i \in \{ \pm 1 \} \}$ of variables
that have not been fixed so far. Then we apply
Cor.~\ref{cor:FindingPointInTranslatedSetSimple} with $\varepsilon := \frac{1}{9000}$ and $\delta \geq \frac{1}{500}$ to find a point $y^{(t)} \in y^{(t-1)} + (K \cap U)$. 
Note that in this application we consider $\setR^{\dim(U)}$ as the ambient space.
In each iteration a constant fraction of coordinates becomes integral and after
 $T = O(\log n)$ iterations we have $y^{(T)} \in \{ \pm 1\}^n$. 
We have $\|y^{(t)} - y^{(t-1)}\|_K \leq 1$ and hence
$\|y\|_K \leq T$ by the triangle inequality. This settles the claim.
\end{proof}
For Spencer's theorem it turns out that the $O(\log n)$-term can be replaced by $O(1)$ since the incurred
discrepancy bounds decrease from iteration to iteration. A general way to state this is as follows:

\begin{lemma}\label{lem:FindingColoringWithConstantBlowup}
Suppose that $K \subseteq \setR^n$ is a symmetric convex body so that
for all axis parallel subspaces $U \subseteq \setR^n$ one has
$\gamma_U((\frac{\textrm{dim}(U)}{n})^{\varepsilon} K) \geq e^{-\dim(U)/500}$ for some constant $\varepsilon > 0$. 
Then one can compute a vector $y \in \{ \pm 1\}^n \cap (c_{\varepsilon} K)$ in polynomial time.
\end{lemma}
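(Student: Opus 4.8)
The plan is to imitate the iterative scheme of Lemma~\ref{lem:FindingColoringWithLogNBlowup}, but to track the geometric decay of the scaling factors across the $T=O(\log n)$ rounds. As before, start with $y^{(0)}:=\bm 0$, and in round $t$ let $U_t$ be the axis-parallel subspace of coordinates not yet fixed to $\pm 1$, with $n_t:=\dim(U_t)$. The hypothesis gives $\gamma_{U_t}\big((\tfrac{n_t}{n})^{\varepsilon}K\big)\ge e^{-n_t/500}$, so I would apply Cor.~\ref{cor:FindingPointInTranslatedSetSimple} inside $\setR^{n_t}$ with the body $K_t:=(\tfrac{n_t}{n})^{\varepsilon}(K\cap U_t)$ and center $y^{(t-1)}$ (restricted to $U_t$), obtaining $y^{(t)}\in (y^{(t-1)}+K_t)\cap[-1,1]^{n_t}$ with at least $\tfrac{1}{2}\cdot\tfrac{1}{9000}\,n_t$ new coordinates rounded to $\pm1$. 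Hence $n_{t+1}\le (1-c_0)n_t$ for a fixed constant $c_0>0$, so after $T=O(\log n)$ rounds every coordinate is fixed and $y:=y^{(T)}\in\{\pm1\}^n$, with $n_t\le (1-c_0)^{t}n$.

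The point of the decaying scale is the norm bound. By construction $y^{(t)}-y^{(t-1)}\in K_t=(\tfrac{n_t}{n})^{\varepsilon}(K\cap U_t)\subseteq (\tfrac{n_t}{n})^{\varepsilon}K$, so $\|y^{(t)}-y^{(t-1)}\|_K\le (\tfrac{n_t}{n})^{\varepsilon}\le (1-c_0)^{\varepsilon(t-1)}$. By the triangle inequality for the Minkowski functional,
\[
\|y\|_K\;\le\;\sum_{t=1}^{T}\|y^{(t)}-y^{(t-1)}\|_K\;\le\;\sum_{t=1}^{\infty}(1-c_0)^{\varepsilon(t-1)}\;=\;\frac{1}{1-(1-c_0)^{\varepsilon}}\;=:\;c_\varepsilon,
\]
which is a finite constant depending only on $\varepsilon$ (through $c_0$, which is absolute). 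Thus $y\in\{\pm1\}^n\cap (c_\varepsilon K)$, as desired, and the whole procedure runs in polynomial time since it is $O(\log n)$ invocations of the polynomial-time algorithm of Cor.~\ref{cor:FindingPointInTranslatedSetSimple}.

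Two technical points need care. First, the measure hypothesis must survive restriction to the nested subspaces: $U_t$ is itself axis-parallel, and once we are working inside $U_t\cong\setR^{n_t}$, any axis-parallel subspace of $U_t$ is axis-parallel in $\setR^n$, so the stated hypothesis (quantified over all axis-parallel subspaces) is exactly what Cor.~\ref{cor:FindingPointInTranslatedSetSimple}, and inductively this lemma, requires at scale $n_t$; one should also check $n_t/500\ge \delta n_t$ with $\delta=\frac{3}{2}\varepsilon\log_2(1/\varepsilon)$ for $\varepsilon=\frac{1}{9000}$, which holds as in the remark after Theorem~\ref{thm:MainTechnicalTheorem}. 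Second, the $(\tfrac{n_t}{n})^{\varepsilon}$ factor is applied to $K\cap U_t$, whose relative measure in $U_t$ is at least $\gamma_{U_t}(K)\ge\gamma_n(K)$ by log-concavity; combining this with the hypothesis at $U=U_t$ gives precisely $\gamma_{U_t}(K_t)\ge e^{-n_t/500}$, so the input to Cor.~\ref{cor:FindingPointInTranslatedSetSimple} is legitimate. The main obstacle is purely bookkeeping: making sure the scaling factor, the fraction of newly-fixed coordinates, and the measure lower bound are all expressed consistently in the current ambient dimension $n_t$ rather than $n$, so that the geometric series closes; there is no new idea beyond Lemma~\ref{lem:FindingColoringWithLogNBlowup} plus this telescoping observation.
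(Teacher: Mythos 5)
Your proposal is correct and follows essentially the same route as the paper: run the iterative scheme of Lemma~\ref{lem:FindingColoringWithLogNBlowup} with the shrinking body $(\frac{\dim(U)}{n})^{\varepsilon}K$, use the geometric decay $\dim(U)\leq c^{t-1}n$ of the unfixed coordinates, and bound $\|y\|_K$ by the geometric series $\sum_t c^{\varepsilon(t-1)} = \frac{1}{1-c^{\varepsilon}}$. The additional bookkeeping you include (that the hypothesis at the axis-parallel subspace $U_t$ is exactly what each invocation of Cor.~\ref{cor:FindingPointInTranslatedSetSimple} needs) is correct and only makes explicit what the paper leaves implicit.
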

\begin{proof}
Now we can apply the procedure from Lemma~\ref{lem:FindingColoringWithLogNBlowup}
even with a body $\tilde{K} := (\frac{\dim(U)}{n})^{\varepsilon} \cdot K$ that shrinks
over the course of the iterations. For some constant $0<c<1$ we have $\dim(U) \leq c^{t-1} \cdot n$
in iteration $t$, hence
$\|y\|_K \leq \sum_{t=1}^T \|y^{(t)}-y^{(t-1)}\|_K \leq \sum_{t=1}^{\infty} (\frac{c^{t-1}n}{n})^{\varepsilon} = \frac{1}{1-c^{\varepsilon}}$.
\end{proof}
Let us illustrate how to apply Lemma~\ref{lem:FindingColoringWithConstantBlowup}
in Spencer's setting. Consider a set system $S_1,\ldots,S_n \subseteq [n]$
with $n$ sets over $n$ elements and define a convex body  $K := \{ x \in \setR^n : |\sum_{j \in S_i} x_j| \leq 100\sqrt{n} \; \forall i \in [n]\}$.
If at some point we have already all elements except of $m$ many colored, then 
this means that we have a subspace $U$ of dimension $\dim(U) = m$ left. 
For such a set system with $m$ elements (but still $n \geq m$ sets), 
we can reduce the right hand side from $100\sqrt{n}$ to a value $100\sqrt{m \cdot \log \frac{2n}{m}}$  and the Gaussian measure is still large enough. 
More formally, if we want
$\gamma_U(\lambda \cdot K) \geq e^{-m/500}$, then a scalar of size
$\lambda = 100\sqrt{ m \cdot \log \frac{2n}{m} } /  (100\sqrt{n}) \leq (\frac{m}{n})^{1/5}$ suffices. 
Then Lemma~\ref{lem:FindingColoringWithConstantBlowup} finds a full coloring of discrepancy $O(\sqrt{n})$.

For the sake of completeness, we want to mention that after a modification of the
constants in \eqref{eq:EntropyCondtion}, the original argument of 
Lovett and Meka~\cite{DiscrepancyMinimization-LovettMekaFOCS12} could be adapted to provide $\frac{\varepsilon}{2}n$ integral coordinates 
while having $(1-\varepsilon)n$ many constraints $i$ with $\lambda_i = 0$.

\section{Extension to vector balancing}

The attentive reader might have realized that we have essentially proven Giannopolous' Theorem 
only in the variant in which the vectors $v_i$ correspond to the unit basis vectors. 
But we want to argue here that the algorithm from above can 
also handle Giannopoulos' general claim (apart from the fact that our partial
signs $x_i$ will be in $[-1,1]$ and not in $\{ -1,0,1\}$). 

For this sake, consider $Q = \{ x \in \setR^m \mid \sum_{i=1}^m x_iv_i \in K\}$. 
Then $Q$ is again a symmetric convex set and all we need to do is to
find a vector $y \in Q \cap [-1,1]^m$ that has $\Omega(m)$ many entries in $\pm 1$.
We know that it suffices to show that $\gamma_m(Q)$ is not too small --- and this is what
we are going to do now. 

First, let us discuss how the Gaussian measure of a body can change if 
we scale it in some direction: 
\begin{lemma} \label{lem:KscaledInOneDirection}
Let $K \in \setR^n$ be symmetric and convex and for some $\lambda \geq 0$ define
$Q := \{ (x_1,x_2,\ldots,x_n) \mid (\lambda x_1,x_2,\ldots,x_n) \in K\}$. Then $Q$ is symmetric and convex and
 $\gamma_n(Q) \geq \frac{1}{\max\{ 1, \lambda\} } \cdot \gamma_n(K)$.
\end{lemma}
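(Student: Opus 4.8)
The plan is to write the Gaussian measure of $Q$ as an integral and compare it, slice by slice, with the Gaussian measure of $K$. First I would note that $Q$ is symmetric and convex because it is the preimage of the symmetric convex set $K$ under the invertible linear map $(x_1,\dots,x_n) \mapsto (\lambda x_1, x_2, \dots, x_n)$ — here I should implicitly assume $\lambda > 0$; the case $\lambda = 0$ makes $Q$ a degenerate (lower-dimensional or empty) set, but the inequality $\gamma_n(Q) \ge \gamma_n(K)$ holds trivially then since $\gamma_n(Q) = 0 \le \gamma_n(K)$ when $\max\{1,\lambda\} = 1$, wait — actually for $\lambda = 0$ we need $Q$ to be a strip $\{x : (0,x_2,\dots,x_n)\in K\}$ which has full measure or zero; I would just state the interesting case $\lambda > 0$ and handle $\lambda=0$ as a degenerate aside, or simply note the map is a bijection for $\lambda>0$.

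The main computation: fix the last $n-1$ coordinates $\bar x = (x_2,\dots,x_n)$ and let $A_{\bar x} := \{ t \in \setR : (t, \bar x) \in K \}$ be the corresponding slice of $K$. By convexity and symmetry of $K$, each nonempty slice $A_{\bar x}$ is a symmetric interval $[-a, a]$ (with $a = a(\bar x)$ possibly $0$ or $\infty$). The corresponding slice of $Q$ is $\{ t : (\lambda t, \bar x) \in K \} = \frac{1}{\lambda} A_{\bar x} = [-a/\lambda, a/\lambda]$. Writing everything with the product density, I get
\[
\gamma_n(Q) = \int_{\setR^{n-1}} \Phi\!\left(\frac{a(\bar x)}{\lambda}\right) \, d\gamma_{n-1}(\bar x), \qquad
\gamma_n(K) = \int_{\setR^{n-1}} \Phi(a(\bar x)) \, d\gamma_{n-1}(\bar x),
\]
where $\Phi(s) = \int_{-s}^s \frac{1}{\sqrt{2\pi}} e^{-u^2/2}\,du$ as defined in the Preliminaries (and $\Phi(\infty) = 1$). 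So it suffices to prove the pointwise one-dimensional inequality $\Phi(s/\lambda) \ge \frac{1}{\max\{1,\lambda\}} \Phi(s)$ for all $s \ge 0$ and all $\lambda > 0$.

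That one-dimensional inequality is the real content. For $\lambda \le 1$ it is immediate since $s/\lambda \ge s$ and $\Phi$ is nondecreasing, so $\Phi(s/\lambda) \ge \Phi(s) = \frac{1}{\max\{1,\lambda\}}\Phi(s)$. For $\lambda > 1$ I need $\Phi(s/\lambda) \ge \frac{1}{\lambda}\Phi(s)$, i.e. $\lambda\,\Phi(s/\lambda) \ge \Phi(s)$. The clean way is to write $\Phi(s) = \int_0^s \frac{2}{\sqrt{2\pi}} e^{-u^2/2}\,du$ and substitute $u = \lambda w$ in $\lambda\Phi(s/\lambda) = \lambda \int_0^{s/\lambda} \frac{2}{\sqrt{2\pi}} e^{-w^2/2}\,dw = \int_0^{s} \frac{2}{\sqrt{2\pi}} e^{-u^2/(2\lambda^2)}\,du$; since $\lambda > 1$ gives $e^{-u^2/(2\lambda^2)} \ge e^{-u^2/2}$ for every $u$, the integral dominates $\Phi(s)$ termwise. (Equivalently: $g(\lambda) := \lambda \Phi(s/\lambda)$ has $g(1) = \Phi(s)$ and $g'(\lambda) = \Phi(s/\lambda) - \frac{s}{\lambda}\cdot\frac{2}{\sqrt{2\pi}}e^{-s^2/(2\lambda^2)} \ge 0$ because the interval $[-s/\lambda, s/\lambda]$ has length $2s/\lambda$ and the density there is at least $\frac{1}{\sqrt{2\pi}}e^{-s^2/(2\lambda^2)}$.) Either way the pointwise bound holds, integrating it over $\bar x$ against $\gamma_{n-1}$ gives $\gamma_n(Q) \ge \frac{1}{\max\{1,\lambda\}}\gamma_n(K)$, and the symmetry/convexity of $Q$ was already observed. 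The only mild obstacle is being careful that the Fubini slicing is legitimate — which it is, since $K$ is closed and convex hence measurable, and its slices are intervals — and handling the degenerate slices ($a = 0$ or $a = \infty$), where $\Phi$ takes the values $0$ or $1$ and the pointwise inequality still holds.
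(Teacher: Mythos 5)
There is a genuine gap, and it sits in the very first reduction. You claim that, for fixed $\bar x = (x_2,\ldots,x_n)$, the slice $A_{\bar x} = \{t : (t,\bar x)\in K\}$ is a \emph{symmetric} interval $[-a,a]$. Symmetry of $K$ only gives $A_{-\bar x} = -A_{\bar x}$; it does not make the slice at a fixed $\bar x\neq 0$ symmetric about the origin. For instance, take $K\subseteq\setR^2$ to be a long thin ellipse (or box) tilted along the diagonal, say $K=\{x: (x_1+x_2)^2/100 + (x_1-x_2)^2\le 1\}$: it is symmetric and convex, yet its slice at $x_2=3$ contains $t=3$ but not $t=-3$. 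Because of this, your reduction to the one-dimensional inequality $\Phi(s/\lambda)\ge \frac{1}{\max\{1,\lambda\}}\Phi(s)$ is not valid. The damage is asymmetric between the two cases: for $\lambda\ge 1$ your substitution argument in fact never uses symmetry of the slice (the pointwise bound $e^{-u^2/(2\lambda^2)}\ge e^{-u^2/2}$ gives $\gamma_1(\tfrac{1}{\lambda}A)\ge\frac{1}{\lambda}\gamma_1(A)$ for \emph{any} measurable $A$), so that half can be repaired verbatim. But for $\lambda\le 1$ the slice-by-slice statement you need, $\gamma_1(\tfrac{1}{\lambda}A)\ge\gamma_1(A)$, is simply false for non-symmetric intervals: $A=[1,2]$ dilated by $\tfrac{1}{\lambda}=10$ becomes $[10,20]$, whose Gaussian measure is far smaller. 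Pairing the slice at $\bar x$ with the one at $-\bar x$ does not help, since both lose measure in this example; the inequality for $\lambda\le 1$ is genuinely a statement about the whole body, not about individual $x_1$-slices (note also that $K\subseteq Q$ fails for the tilted ellipse, so there is no containment argument either).

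The paper avoids this by slicing in the orthogonal direction: it writes $\gamma_n(Q)=2\int_0^\infty \frac{1}{\sqrt{2\pi}}e^{-x_1^2/2} f(\lambda x_1)\,dx_1$ with $f(x_1):=\Pr_{x_2,\ldots,x_n\sim N(0,1)}[(x_1,\ldots,x_n)\in K]$, and uses that $f$ is even and non-increasing on $[0,\infty)$ --- a consequence of the symmetry and convexity of $K$ as a whole (e.g.\ via log-concavity of $f$) --- so that $f(\lambda x_1)\ge f(x_1)$ handles $\lambda\le 1$, while the same density comparison you use (after substituting $x_1\mapsto x_1/\lambda$) handles $\lambda\ge 1$. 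To fix your write-up you would either have to switch to this cross-sectional monotonicity argument or invoke a result of comparable strength (e.g.\ Anderson-type unimodality of the sections), so the missing ingredient is not a routine technicality.
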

\begin{proof}
Define $f(x_1) := \Pr_{x_2,\ldots,x_n \sim N(0,1)}[x \in K]$. 
Note that $f$ is a symmetric function and it is monotone in the 
sense that $0\leq x_1 \leq y_1 \Rightarrow f(x_1)\geq f(y_1)$. 
Then we can express both measures as
\begin{eqnarray*}
  \gamma_n(Q) &=& 2\int_{0}^{\infty} \frac{1}{\sqrt{2\pi}} e^{-x_1^2/2} \cdot f(\lambda x_1) \; dx_1 = 2\int_{0}^{\infty} \underbrace{\frac{1}{\sqrt{2\pi} \lambda} e^{-(x_1/\lambda)^2/2}}_{(*)} \cdot f(x_1)\; dx_1 \\ 
\gamma_n(K) &=& 2\int_{0}^{\infty} \underbrace{\frac{1}{\sqrt{2\pi}} e^{-x_1^2/2}}_{(**)} \cdot f(x_1) \; dx_1
\end{eqnarray*}
For $\lambda \leq 1$, we see that $f(\lambda x_1) \geq f(x_1)$ and hence $\gamma_n(Q) \geq \gamma_n(K)$. 
For $\lambda \geq 1$, we can estimate that 
$
\frac{(*)}{(**)} = \frac{1}{\lambda} \exp(\frac{1}{2} x_1^2 (1-\frac{1}{\lambda^2})) \geq \frac{1}{\lambda} 
$ and hence $\gamma_n(Q) \geq \frac{1}{\lambda} \gamma_n(K)$.
\end{proof}

Since also the scaled set $Q$ is symmetric, iteratively applying 
Lemma~\ref{lem:KscaledInOneDirection} gives:
\begin{corollary} \label{cor:KscaledMultiDirections}
Let $K \subseteq \setR^n$ be symmetric and convex and $\lambda \in \setR^n$. Then
\[
   \Pr_{x \sim N^n(0,1)}[(\lambda_1x_1,\ldots,\lambda_nx_n) \in K] \geq \frac{1}{\prod_{i=1}^n \max\{ 1,|\lambda_i|\}}  \Pr_{x \sim N_n(0,1)}[x \in K]
\]
\end{corollary}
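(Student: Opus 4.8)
The plan is to prove Corollary~\ref{cor:KscaledMultiDirections} by a straightforward induction on the number of coordinates along which we scale, using Lemma~\ref{lem:KscaledInOneDirection} as the single-direction building block. Concretely, I would introduce the intermediate bodies
\[
  Q^{(k)} := \{ (x_1,\ldots,x_n) \mid (\lambda_1 x_1,\ldots,\lambda_k x_k, x_{k+1},\ldots,x_n) \in K\},
\]
so that $Q^{(0)} = K$ and $Q^{(n)}$ is exactly the set whose measure appears on the left-hand side of the claimed inequality, i.e. $\gamma_n(Q^{(n)}) = \Pr_{x\sim N^n(0,1)}[(\lambda_1 x_1,\ldots,\lambda_n x_n)\in K]$. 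The induction hypothesis is $\gamma_n(Q^{(k)}) \geq \big(\prod_{i=1}^k \max\{1,|\lambda_i|\}\big)^{-1}\gamma_n(K)$.

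For the inductive step I would observe that $Q^{(k+1)}$ is obtained from $Q^{(k)}$ by scaling along the single coordinate direction $x_{k+1}$ by the factor $\lambda_{k+1}$ — that is, $Q^{(k+1)} = \{ x \mid (x_1,\ldots,x_k,\lambda_{k+1}x_{k+1},x_{k+2},\ldots,x_n)\in Q^{(k)}\}$, which is precisely the transformation in Lemma~\ref{lem:KscaledInOneDirection} applied to $Q^{(k)}$ (after a trivial permutation of coordinates, under which $\gamma_n$ is invariant). The lemma also guarantees that $Q^{(k)}$ remains symmetric and convex, which is exactly the hypothesis needed to reapply it, so the induction is self-sustaining. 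Applying the lemma gives $\gamma_n(Q^{(k+1)}) \geq \max\{1,|\lambda_{k+1}|\}^{-1}\gamma_n(Q^{(k)})$, and combining with the induction hypothesis yields the bound at level $k+1$. One small point to handle: Lemma~\ref{lem:KscaledInOneDirection} is stated for $\lambda\geq 0$, whereas here $\lambda_i$ may be negative; but since $K$ is symmetric, scaling by $\lambda_i$ and by $|\lambda_i|$ produce the same set (reflecting a symmetric set in a coordinate hyperplane leaves it unchanged), so we may replace each $\lambda_i$ by $|\lambda_i|$ at the outset without affecting either side.

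I do not anticipate a genuine obstacle here — the corollary is essentially a bookkeeping consequence of the lemma. The only thing to be careful about is making the ``iteratively applying'' claim rigorous: namely verifying that the class of symmetric convex sets is closed under the single-coordinate scaling operation (so that the lemma's own hypotheses are preserved through all $n$ steps) and that the scaling operations along distinct coordinates commute, so that the order in which we peel them off does not matter. Both facts are immediate, so the writeup can be kept to a few lines: set up the $Q^{(k)}$, invoke symmetry to assume $\lambda_i\geq 0$, and run the induction.
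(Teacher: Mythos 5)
Your proposal is correct and is essentially the paper's own argument: the paper proves the corollary precisely by iteratively applying Lemma~\ref{lem:KscaledInOneDirection} coordinate by coordinate, and your write-up just makes the induction, the symmetry/convexity preservation, and the reduction to $\lambda_i \geq 0$ explicit. No gaps.
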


\begin{lemma}
Let $v_1,\ldots,v_m \in \setR^n$ vectors with $\|v_i\|_2^2 \leq \beta$ for $i=1,\ldots,m$
and let $K \subseteq \setR^n$ be a symmetric convex set. 
For $Q = \{ x \in \setR^m \mid \sum_{i=1}^m x_iv_i \in K\}$
one has $\gamma_m(Q) \geq \gamma_n(K) \cdot e^{-\beta m}$.
\end{lemma}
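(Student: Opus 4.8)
The plan is to express $\gamma_m(Q)$ as a Gaussian probability over $x \sim N^m(0,1)$ and relate the event $\sum_{i} x_i v_i \in K$ to a statement about a random Gaussian in $\setR^n$. The natural idea is to write $\sum_{i=1}^m x_i v_i = V x$ where $V$ is the $n \times m$ matrix with columns $v_1,\dots,v_m$, so that $\gamma_m(Q) = \Pr_{x \sim N^m(0,1)}[Vx \in K]$. The random vector $Vx$ is a (degenerate) Gaussian in $\setR^n$ with covariance $V V^T = \sum_{i=1}^m v_i v_i^T$. The difficulty is that $Q$ lives in $\setR^m$ but $K$ lives in $\setR^n$, and these dimensions are unrelated, so I cannot directly invoke Corollary~\ref{cor:KscaledMultiDirections} in dimension $n$; instead I want to carry out the scaling argument in dimension $m$.

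First I would reduce to the case where the $v_i$ are mutually orthogonal. The key observation is that $Q$ depends on the vectors $v_i$ only through which points $\sum_i x_i v_i$ land in $K$, and if I apply an orthogonal change of coordinates in $\setR^n$ (which preserves both $\gamma_n(K)$ and symmetry/convexity of $K$, since $\gamma_n$ is rotation-invariant) I may assume the span of $v_1,\dots,v_m$ is a coordinate subspace; but orthogonality of the $v_i$ themselves is not automatic. So instead I would use a slightly different route: embed things into $\setR^m$ directly. Consider the body $K' := \{ x \in \setR^m : Vx \in K \}$ — this is exactly $Q$. Now the point is to lower-bound $\gamma_m(K')$ in terms of $\gamma_n(K)$ and $\beta$. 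The cleanest way is: let $W$ be an $m \times m$ orthogonal matrix such that $V W$ has a nice form; by the SVD, write $V = U \Sigma W^T$ with $U \in \setR^{n\times m}$ having orthonormal columns, $W \in \setR^{m \times m}$ orthogonal, and $\Sigma = \textrm{diag}(\sigma_1,\dots,\sigma_m)$ the singular values. Since $\|v_i\|_2^2 \le \beta$, the columns of $V$ have squared norm $\le \beta$, hence $\sum_i \sigma_i^2 = \sum_i \|v_i\|_2^2 \le \beta m$, and in particular each $\sigma_i^2 \le \beta m$, but more usefully $\prod_i \max\{1,\sigma_i\} \le \prod_i \max\{1, \sigma_i^2\} \le \exp(\sum_i \sigma_i^2) \le e^{\beta m}$ (using $\max\{1,t\} \le e^{t-1} \le e^{t}$ for the relevant bookkeeping — I'd check the constant, but $\prod \max\{1,\sigma_i\}^2 = \prod\max\{1,\sigma_i^2\} \le \prod e^{\sigma_i^2} = e^{\sum \sigma_i^2} \le e^{\beta m}$, so $\prod_i \max\{1,\sigma_i\} \le e^{\beta m/2} \le e^{\beta m}$).

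Now for the measure estimate: because $\gamma_m$ is rotation-invariant, $\gamma_m(Q) = \Pr_{x}[V x \in K] = \Pr_x[U \Sigma W^T x \in K] = \Pr_x[U \Sigma x \in K]$ (absorbing the orthogonal $W^T$ into the Gaussian). Letting $\tilde K := \{ x \in \setR^m : U x \in K\}$, which is symmetric and convex, I have $\gamma_m(Q) = \Pr_x[\Sigma x \in \tilde K] = \Pr_x[(\sigma_1 x_1,\dots,\sigma_m x_m) \in \tilde K]$, and Corollary~\ref{cor:KscaledMultiDirections} applied in $\setR^m$ gives $\gamma_m(Q) \ge \frac{1}{\prod_i \max\{1,\sigma_i\}} \gamma_m(\tilde K) \ge e^{-\beta m}\, \gamma_m(\tilde K)$. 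Finally I need $\gamma_m(\tilde K) \ge \gamma_n(K)$: since $U$ has orthonormal columns, $Ux$ for $x \sim N^m(0,1)$ is exactly $N_{H}(0,1)$ for the subspace $H = \textrm{col}(U) \subseteq \setR^n$, so $\gamma_m(\tilde K) = \gamma_H(K) \ge \gamma_n(K)$ by the log-concavity fact recalled in the excerpt (the inequality $\gamma_H(K) \ge \gamma_n(K)$ for symmetric convex $K$). Combining, $\gamma_m(Q) \ge e^{-\beta m} \gamma_n(K)$, as desired.

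The main obstacle I anticipate is the constant in the product bound $\prod_i \max\{1,\sigma_i\} \le e^{\beta m}$: one must be a little careful that $\sum_i \sigma_i^2 = \sum_i \|v_i\|_2^2$ (this is just $\textrm{tr}(V^T V) = \textrm{tr}(V V^T)$, the Frobenius norm identity), and then that $\max\{1,t\} \le e^{\max\{0,\log t\}} \le e^{(t-1)_+} \le e^{(t^2-1)_+/2}$-type manipulations land on the stated bound; the safest version is $\prod_i \max\{1,\sigma_i\} = \big(\prod_i \max\{1,\sigma_i^2\}\big)^{1/2} \le \big(\prod_i e^{\sigma_i^2}\big)^{1/2} = e^{(\sum_i \sigma_i^2)/2} \le e^{\beta m/2} \le e^{\beta m}$. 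Everything else — rotation invariance of the Gaussian, the SVD, and the two cited facts (Corollary~\ref{cor:KscaledMultiDirections} and $\gamma_H(K)\ge\gamma_n(K)$) — is routine.
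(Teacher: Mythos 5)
Your argument is correct in substance and is, at its core, the same argument as the paper's: the paper's appeal to the ``well-known fact'' that $X=\sum_i x_iv_i$ can be rewritten as $\sum_{i=1}^n y_iu_ib_i$ with an orthonormal basis $b_i$ and $\|u\|_2^2=\sum_i\|v_i\|_2^2$ is exactly the singular value decomposition of $V$ (the $|u_i|$ are your $\sigma_i$, and the Frobenius identity is your $\sum_i\sigma_i^2=\sum_i\|v_i\|_2^2$), followed by Corollary~\ref{cor:KscaledMultiDirections} and the bound $\prod_i\max\{1,|u_i|\}\le e^{\sum_i u_i^2}$. The difference is where you apply the scaling corollary: the paper stays in $\setR^n$, absorbing $U$ into $K$ by rotation invariance, so it needs nothing beyond Corollary~\ref{cor:KscaledMultiDirections}; you instead pull everything into $\setR^m$, which forces the extra step $\gamma_m(\tilde K)=\gamma_H(K)\ge\gamma_n(K)$ via log-concavity for the column space $H$ of $U$. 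That step is legitimate (the paper records the fact in the subspace section), and your bookkeeping even yields the marginally better constant $e^{\beta m/2}$, but it is an avoidable dependence: rotating $K$ by $U^T$ in $\setR^n$ sidesteps it entirely.

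One small repair is needed for full generality: your thin SVD with $U\in\setR^{n\times m}$ having orthonormal columns and $\Sigma$ an $m\times m$ diagonal matrix only exists when $m\le n$, whereas the lemma (and its vector-balancing application, where typically $m\ge n$) allows $m>n$. For $m>n$ use the full or compact SVD: $U$ is then $n\times r$ with $r=\mathrm{rank}(V)\le n$ orthonormal columns, $W^Tx\sim N^r(0,1)$, the product runs over at most $r\le n$ singular values with $\sum_{i\le r}\sigma_i^2\le\beta m$ unchanged, and when $r=n$ the subspace step degenerates to $\gamma_H(K)=\gamma_n(K)$. With that adjustment your proof goes through verbatim.
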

\begin{proof}
We consider the random vector  $X = \sum_{i=1}^m x_iv_i$ with independent Gaussians $x_i \sim N(0,1)$. 
It is a well known fact in probability theory (see e.g. page 84 in \cite{IntroToProbability-FellerVol2-1971}), 
that there is an orthonormal basis  $b_1,\ldots,b_n \in \setR^n$ and $u \in \setR^n$ so that one can write $X = \sum_{i=1}^n y_iu_ib_i$
with $y_1,\ldots,y_n \sim N(0,1)$ being independent
Gaussians and the total variance of $X$ is preserved, that means $\|u\|_2^2 = \sum_{i=1}^m \|v_i\|_2^2$.
If we abbreviate $\Lambda :=  \prod_{i=1}^n \max\{ 1, |u_i| \}$, then 
we can apply Corollary~\ref{cor:KscaledMultiDirections} to lower bound
\[
\gamma_m(Q) = \Pr[X \in K] = \Pr_{y \sim N^n(0,1)}\Big[\sum_{i=1}^n y_iu_ib_i \in K\Big] \geq \frac{1}{\Lambda} \Pr_{y \sim N^n(0,1)}\Big[\sum_{i=1}^n y_ib_i \in K\Big] = \frac{1}{\Lambda} \gamma_n(K) 
\]
using the rotational symmetry of $\gamma_n$.
It remains to provide a (fairly crude) upper bound on $\Lambda$, which is
\[
  \Lambda  = \prod_{i=1}^n \max\{ 1, |u_i| \} 
\leq \prod_{i=1}^n (1+u_i^2) 
\stackrel{1+x \leq e^{x}}{\leq}
\exp\Big( \sum_{i=1}^n u_i^2\Big) = \exp\Big( \sum_{i=1}^m \|v_i\|_2^2 \Big) \leq e^{\beta m}
\]
\end{proof}

For example, if $\gamma_n(K) \geq e^{-m/1000}$ and $\|v_i\|_2^2 \leq \frac{1}{1000}$, then $\gamma_m(Q) \geq e^{-m/500}$
and we can apply Theorem~\ref{thm:MainTheorem} to obtain:
\begin{theorem}
Given a symmetric convex set $K \subseteq \setR^n$ with $\gamma_n(K) \geq e^{-m/1000}$ and vectors 
$v_1,\ldots,v_m \in \setR^n$, with $\|v_i\|_2 \leq \frac{1}{40}$ for all $i=1,\ldots,m$, 
there is a randomized polynomial time algorithm to find a $y \in [-1,1]^m$ with $\sum_{i=1}^m v_iy_i \in K$
and at least $\frac{m}{9000}$ many indices $i$ that have $y_i \in \{ \pm 1\}$.
Here it suffices to have access to a polynomial time separation oracle for $K$. 
\end{theorem}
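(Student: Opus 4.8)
The plan is to reduce the final theorem directly to Theorem~\ref{thm:MainTheorem} (or, being slightly more careful, to Corollary~\ref{cor:FindingPointInTranslatedSetSimple} / Theorem~\ref{thm:MainTechnicalTheorem}) by transporting the vector-balancing problem into the ambient space $\setR^m$. First I would set $Q := \{ x \in \setR^m \mid \sum_{i=1}^m x_iv_i \in K\}$; since $K$ is symmetric and convex and the map $x \mapsto \sum_i x_iv_i$ is linear, $Q$ is a symmetric convex subset of $\setR^m$, and a separation oracle for $Q$ follows from one for $K$ by pulling back a separating hyperplane through the linear map. So it remains only to check that $\gamma_m(Q)$ is large enough to invoke the main theorem in dimension $m$.

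The measure bound is exactly the content of the preceding lemma: if $\|v_i\|_2^2 \leq \beta$ for all $i$, then $\gamma_m(Q) \geq \gamma_n(K) \cdot e^{-\beta m}$. I would apply it with $\beta = \tfrac{1}{40^2} = \tfrac{1}{1600}$ (so $\|v_i\|_2 \leq \tfrac{1}{40}$ gives $\|v_i\|_2^2 \leq \tfrac{1}{1600} \leq \tfrac{1}{1000}$) and with the hypothesis $\gamma_n(K) \geq e^{-m/1000}$, to get
\[
\gamma_m(Q) \;\geq\; e^{-m/1000}\cdot e^{-m/1600} \;\geq\; e^{-m/500}.
\]
Now apply Theorem~\ref{thm:MainTheorem} (with ambient dimension $m$ in place of $n$) to the symmetric convex set $Q \subseteq \setR^m$: since $\gamma_m(Q) \geq e^{-m/500}$, the randomized algorithm — take a Gaussian $x^* \sim N^m(0,1)$, compute the nearest point $y^*$ in $Q \cap [-1,1]^m$ — returns, with probability $1 - 2^{-\Omega(m)}$, a point $y \in Q \cap [-1,1]^m$ with at least $\tfrac{m}{9000}$ coordinates in $\{\pm 1\}$. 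By definition of $Q$, such a $y$ satisfies $\sum_{i=1}^m v_iy_i \in K$, which is precisely the asserted output. The running time is polynomial because we only need a separation oracle for $Q \cap [-1,1]^m$, which we have.

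There is no serious obstacle here — the theorem is essentially a packaging of the lemma plus Theorem~\ref{thm:MainTheorem} — but the one point that needs a little care is the bookkeeping of constants: one must verify that the chosen $\beta$ and the hypothesis on $\gamma_n(K)$ together push $\gamma_m(Q)$ below the $e^{-m/500}$ threshold required by Theorem~\ref{thm:MainTheorem}, and that the fraction of $\pm 1$ coordinates is measured against $m$ (the dimension of $Q$'s ambient space), not $n$. One should also note explicitly that separation for $K$ yields separation for $Q$: given $x \in \setR^m$ with $\sum_i x_iv_i \notin K$, a hyperplane $\langle c, \cdot\rangle \leq \lambda$ separating $\sum_i x_iv_i$ from $K$ pulls back to the hyperplane $\langle (\langle c,v_1\rangle,\ldots,\langle c,v_m\rangle), \cdot\rangle \leq \lambda$ separating $x$ from $Q$, so the polynomial-time oracle hypothesis transfers cleanly. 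With these remarks in place the proof is complete.
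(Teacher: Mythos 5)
Your proposal is correct and follows essentially the same route as the paper: define $Q = \{ x \in \setR^m \mid \sum_i x_iv_i \in K\}$, use the scaling lemma to get $\gamma_m(Q) \geq \gamma_n(K)\cdot e^{-\beta m} \geq e^{-m/500}$ with $\beta = \tfrac{1}{1600}$, and invoke Theorem~\ref{thm:MainTheorem} in ambient dimension $m$; your added remark on pulling back the separation oracle is a correct (and welcome) detail the paper leaves implicit. The only blemish is the phrase ``push $\gamma_m(Q)$ \emph{below} the $e^{-m/500}$ threshold,'' which should read \emph{above} (i.e., $\gamma_m(Q)\geq e^{-m/500}$), as your computation in fact establishes.
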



\paragraph{Concluding remarks.} 

Finally, we want to repeat that it is still a wide open problem whether or not the
proof of Banaszczyk~\cite{BalancingVectors-Banaszczyk98} can be made constructive.

The author is very grateful to Daniel Dadush, Jakub Tarnawski and to the anonymous referees for their helpful comments.

\bibliographystyle{alpha}
\bibliography{constructive-discrepancy-minimization}

\begin{thebibliography}{{Nik}13}

\bibitem[Ban98]{BalancingVectors-Banaszczyk98}
W.~Banaszczyk.
\newblock Balancing vectors and {G}aussian measures of {$n$}-dimensional convex
  bodies.
\newblock {\em Random Structures Algorithms}, 12(4):351--360, 1998.

\bibitem[Ban10]{DiscrepancyMinimization-Bansal-FOCS2010}
N.~Bansal.
\newblock Constructive algorithms for discrepancy minimization.
\newblock In {\em FOCS}, pages 3--10, 2010.

\bibitem[Bec81]{Beck-RothsEstimateIsSharp1981}
J.~Beck.
\newblock Roth's estimate of the discrepancy of integer sequences is nearly
  sharp.
\newblock {\em Combinatorica}, 1(4):319--325, 1981.

\bibitem[BF81]{IntegerMakingTheorems-BeckFiala81}
J.~Beck and T.~Fiala.
\newblock ``{I}nteger-making'' theorems.
\newblock {\em Discrete Appl. Math.}, 3(1):1--8, 1981.

\bibitem[Boh90]{DiscrepancyOf3Permutations-Bohus90}
G.~Bohus.
\newblock On the discrepancy of {$3$} permutations.
\newblock {\em Random Structures Algorithms}, 1(2):215--220, 1990.

\bibitem[Cha01]{DiscrepancyMethod-Chazelle2001}
B.~Chazelle.
\newblock {\em The discrepancy method - randomness and complexity}.
\newblock Cambridge University Press, 2001.

\bibitem[ES14]{EldanSinghArxiv14}
R.~Eldan and M.~Singh.
\newblock Efficient algorithms for discrepancy minimization in convex sets.
\newblock {\em CoRR}, abs/1409.2913, 2014.

\bibitem[Fel71]{IntroToProbability-FellerVol2-1971}
W.~Feller.
\newblock {\em An introduction to probability theory and its applications.
  {V}ol. {II}.}
\newblock Second edition. John Wiley \& Sons, Inc., New York-London-Sydney,
  1971.

\bibitem[Gia97]{Giannopoulos1997}
A.~Giannopoulos.
\newblock On some vector balancing problems.
\newblock {\em Studia Mathematica}, 122(3):225--234, 1997.

\bibitem[GLS81]{GLS-algorithm-Journal81}
M.~Grötschel, L.~Lovász, and A.~Schrijver.
\newblock The ellipsoid method and its consequences in combinatorial
  optimization.
\newblock {\em Combinatorica}, 1(2):169--197, 1981.

\bibitem[Glu89]{RootNDisc-Gluskin89}
E.~D. Gluskin.
\newblock Extremal properties of orthogonal parallelepipeds and their
  applications to the geometry of banach spaces.
\newblock {\em Mathematics of the USSR-Sbornik}, 64(1):85, 1989.

\bibitem[Kha67]{KhatriCorrelationInequality67}
C.~G. Khatri.
\newblock On certain inequalities for normal distributions and their
  applications to simultaneous confidence bounds.
\newblock {\em Ann. Math. Statist.}, 38:1853--1867, 1967.

\bibitem[LM12]{DiscrepancyMinimization-LovettMekaFOCS12}
S.~Lovett and R.~Meka.
\newblock Constructive discrepancy minimization by walking on the edges.
\newblock In {\em FOCS}, pages 61--67, 2012.

\bibitem[LT11]{ProbabilityInBanachSpace-TalagrandLedoux91}
M.~Ledoux and M.~Talagrand.
\newblock {\em Probability in {B}anach spaces}.
\newblock Classics in Mathematics. Springer-Verlag, Berlin, 2011.
\newblock Isoperimetry and processes, Reprint of the 1991 edition.

\bibitem[Mat99]{GeometricDiscrepancy-Matousek99}
J.~Matou{\v{s}}ek.
\newblock {\em Geometric discrepancy}, volume~18 of {\em Algorithms and
  Combinatorics}.
\newblock Springer-Verlag, Berlin, 1999.
\newblock An illustrated guide.

\bibitem[MSS13]{KadisonSingerConjecture-MarcusSpielmanSrivastavaArxiv2013}
A.~{Marcus}, D.~A {Spielman}, and N.~{Srivastava}.
\newblock {Interlacing Families II: Mixed Characteristic Polynomials and the
  Kadison-Singer Problem}.
\newblock {\em ArXiv e-prints}, June 2013.

\bibitem[{Nik}13]{NikolovVectorColoringKomlosArxiv2013}
A.~{Nikolov}.
\newblock {The Komlos Conjecture Holds for Vector Colorings}.
\newblock {\em ArXiv e-prints}, January 2013.

\bibitem[NNN12]{CounterexampleBecksPermutationConjecture-FOCS12}
A.~Newman, O.~Neiman, and A.~Nikolov.
\newblock Beck's three permutations conjecture: A counterexample and some
  consequences.
\newblock In {\em FOCS}, pages 253--262, 2012.

\bibitem[{\v{S}}id67]{SidaksLemma67}
Z.~{\v{S}}id{\'a}k.
\newblock Rectangular confidence regions for the means of multivariate normal
  distributions.
\newblock {\em J. Amer. Statist. Assoc.}, 62:626--633, 1967.

\bibitem[Spe85]{SixStandardDeviationsSuffice-Spencer1985}
J.~Spencer.
\newblock Six standard deviations suffice.
\newblock {\em Transactions of the American Mathematical Society},
  289(2):679--706, 1985.

\bibitem[Sri97]{DiscrepancyBound-sqrtT-logN-SrinivasanSODA97}
A.~Srinivasan.
\newblock Improving the discrepancy bound for sparse matrices: Better
  approximations for sparse lattice approximation problems.
\newblock In {\em SODA'97}, pages 692--701, Philadelphia, PA, 1997. ACM SIGACT,
  SIAM.

\bibitem[SST]{DiscrepancyOfPermutations-SpencerEtAl}
J.~H. Spencer, A.~Srinivasan, and P.~Tetali.
\newblock The discrepancy of permutation families.
\newblock Unpublished manuscript.

\end{thebibliography}

\end{document}